\newtheorem{defn}{Definition}
\newtheorem{thm}{{\cal T}heorem}[section]
\newtheorem{cor}[thm]{Corollary}
\newtheorem{prop}{Proposition}
\newtheorem{lem}[thm]{Lemma}
\newtheorem{conj}[thm]{Conjecture}
\newtheorem{constr}[thm]{Construction}
\newtheorem{note}{Remark}
\newtheorem{example}{Example}
\newcommand{\bit}{\begin{itemize}}
\newcommand{\eit}{\end{itemize}}
\newcommand{\bcor}{\begin{cor}}
\newcommand{\ecor}{\end{cor}}
\newcommand{\beq}{\begin{equation}}
\newcommand{\eeq}{\end{equation}}
\newcommand{\beqn}{\begin{equation*}}
\newcommand{\eeqn}{\end{equation*}}
\newcommand{\bea}{\begin{eqnarray}}
\newcommand{\eea}{\end{eqnarray}}
\newcommand{\bean}{\begin{eqnarray*}}
\newcommand{\eean}{\end{eqnarray*}}
\newcommand{\ben}{\begin{enumerate}}
\newcommand{\een}{\end{enumerate}}
\newcommand{\bdefn}{\begin{defn}}
\newcommand{\edefn}{\end{defn}}
\newcommand{\bnote}{\begin{note}}
\newcommand{\enote}{\end{note}}
\newcommand{\bprop}{\begin{prop}}
\newcommand{\eprop}{\end{prop}}
\newcommand{\blem}{\begin{lem}}
\newcommand{\elem}{\end{lem}}
\newcommand{\bthm}{\begin{thm}}
\newcommand{\ethm}{\end{thm}}
\newcommand{\bconj}{\begin{conj}}
\newcommand{\econj}{\end{conj}}
\newcommand{\bconstr}{\begin{constr}}
\newcommand{\econstr}{\end{constr}}
\newcommand{\bpf}{\begin{proof}}
\newcommand{\epf}{\end{proof}}
\newcommand{\bprf}{{\em Proof: }}
\newcommand{\eprf}{\hfill $\Box$}
\DeclarePairedDelimiter\ceil{\lceil}{\rceil}
\DeclarePairedDelimiter\floor{\lfloor}{\rfloor}
\newcommand{\cauchy}[3]{\g_{1,#1_1} & \ldots & \g_{1,#1_{#2}} \\
	\vdots & \ddots & \vdots \\
	\g_{#3,#1_1} & \ldots & \g_{#3, #1_{#2}}}
\newcommand{\cauchyrowm}[4]{#4_{#1} \cdot \g_{#3,#1_1} & \ldots & #4_{#1} \cdot \g_{#3,#1_{#2}}}
\newcommand{\cauchyrowp}[3]{\g_{#3,#1_1} & \ldots & \g_{#3,#1_{#2}}}
\newcommand{\Fq}[1]{
	\mathbb{F}_{q^{#1}}
}
\newcommand\scalemath[2]{\scalebox{#1}{\mbox{\ensuremath{\displaystyle #2}}}}
\newcommand{\Fqo}{\mathbb{F}_{q_0}}
\newcommand{\A}{\alpha}
\newcommand{\B}{\beta}
\newcommand{\dl}{\delta}
\newcommand{\g}{\gamma}
\newcommand{\La}{\lambda}
\newcommand{\eal}{\end{align*}}
\newcommand{\bbm}{\begin{bmatrix}}
\newcommand{\ebm}{\end{bmatrix}}
\newcommand{\bpm}{\begin{pmatrix}}
\newcommand{\epm}{\end{pmatrix}}
\newcommand{\bal}{\begin{align*}}
    \newcommand{\bfg}{\begin{figure*}}
	\newcommand{\efg}{\end{figure*}}
\newcommand{\mattwoone}[2]{
\left(
  \begin{matrix}
  #1\\
  #2\\
  \end{matrix}
  \right)}
  \newcommand{\matthreeone}[3]{
\left(
  \begin{matrix}
  #1\\
  #2\\
  #3\\
  \end{matrix}
  \right)}
\title{Maximally Recoverable Codes with Hierarchical Locality: Constructions and Field-Size Bounds}
\author{D. Shivakrishna, Aaditya M. Nair and V. Lalitha
\thanks{D. Shivakrishna, Aaditya M. Nair and V. Lalitha are with Signal Processing and Communications Research Centre, International Institute of Information Technology, Hyderabad,
 India (email: d.shivakrishna@research.iiit.ac.in, aadityamnair@research.iiit.ac.in, lalitha.v@iiit.ac.in).}
\thanks{The work of V. Lalitha is supported partly by the grants ECR/2016/000954 and MTR/2017/000806 from Science and Engineering Research Board (SERB).} 
\thanks{This paper was presented in part at the 2019 National Conference on Communications \cite{nair2019maximally} and to be presented in part at 2021 International Symposium on Information Theory (ISIT).}}
\begin{document}

\maketitle

 \begin{abstract}
 Maximally recoverable codes are a class of codes which recover from all potentially recoverable erasure patterns given the locality constraints of the code. In earlier works, these codes have been studied in the context of codes with locality. The notion of locality has been extended to hierarchical locality, which allows for locality to gradually increase in levels with the increase in the number of erasures. We consider the locality constraints imposed by codes with two-level hierarchical locality and define maximally recoverable codes with data-local and local hierarchical locality. We derive certain properties related to their punctured codes and minimum distance. We give a procedure to construct hierarchical data-local MRCs from hierarchical local MRCs. We provide a construction of hierarchical local MRCs for all parameters. We also give constructions of MRC with hierarchical locality for some parameters, whose field size is smaller than that of known constructions for general parameters.  We also derive a field size lower bound on MRC with hierarchical locality.

 \end{abstract}

\section{Introduction}

With application to distributed storage systems, the notion of locality of a code was introduced in \cite{gopalan2012locality}, which enables efficient node repair in case of single node failures (node failures modelled as erasures)  by contacting fewer nodes than the conventional erasure codes based on maximum distance separable (MDS) codes. An extension to handle multiple erasures has been studied in \cite{locality}.
A code symbol is said to have $(r, \delta)$ locality if there exists a punctured code $\mathcal{C}_i$ such that $c_i \in Supp(\mathcal{C}_i)$ and the following conditions hold, 
\begin{itemize}
    \item $dim(\mathcal{C}_i) \leq r$ and
    \item $d_{min}(\mathcal{C}_i) \geq \delta$
\end{itemize}
An $[n,k,d_{min}]$ code is said to have $(r,\delta)$ information locality, if $k$ data symbols have $(r,\delta)$ locality and it is said to have all-symbol locality if all the $n$ code symbols have $(r,\delta)$ locality. An upper bound on the minimum distance of a code with $(r,\delta)$ information locality is given by
\begin{equation} \label{eq:dmin_rdelta}
d_{min} \leq n - k + 1 - \left ( \left \lceil \frac{k}{r} \right \rceil -1 \right ) (\delta-1).
\end{equation}

\subsection{Maximally Recoverable Codes with Locality}

Maximally recoverable codes (MRC) are a class of codes which recover from all information theoretically recoverable erasure patterns given the locality constraints of the code. Maximally recoverable codes with locality have been defined for the case of $\delta =2$ in \cite{mrc}. We extend the definitions here for the general $\delta$.

\begin{defn}[Data Local Maximally Recoverable Code]\label{defn:data_local}
Let $C$ be a systematic $[n,k,d_{min}]$ code. We say that $\mathcal{C}$ is an $[k, r, h, \delta]$ data-local maximally recoverable code if the following conditions are satisfied
\bit
    \item $r | k$ and $n=k+\frac{k}{r}\cdot \delta+h$
    \item Data symbols are partitioned into $\frac{k}{r}$ groups of size $r$. For each such group, there are $\delta$ local parity symbols.
    \item The remaining $h$ global parity symbols may depend on all $k$ symbols.
    \item For any set $E \subseteq [n]$ where $E$ is obtained by picking $\delta$ coordinates from each $\frac{k}{r}$ local groups, puncturing $\mathcal{C}$ in coordinates in $E$ yields a $[k+h, k]$ MDS code. 
\eit
\end{defn}
$[k, r, h, \delta]$ data-local MRC is optimum with respect to minimum distance bound in \eqref{eq:dmin_rdelta}. The minimum distance of a $[k, r, h, \delta]$ data-local MRC is given by 
\begin{equation} \label{eq:dminDLMRC}
d_{min} = h+\delta+1.
\end{equation}

\begin{defn}[Local Maximally Recoverable Code]\label{defn:local}
Let $C$ be a systematic $[n,k,d_{min}]$ code. We say that $\mathcal{C}$ is an $[k, r, h, \delta]$ local maximally recoverable code if the following conditions are satisfied
\bit
    \item $r | (k+h)$ and $n=k+\frac{k+h}{r}\cdot \delta+h$
    \item There are $k$ data symbols and $h$ global parity symbols where each global parity may depend on all data symbols.  
    \item These $k+h$ symbols are partitioned into $\frac{k+h}{r}$ groups of size $r$. For each group there are $\delta$ local parity symbols.
    \item For any set $E \subseteq [n]$ where $E$ is obtained by picking $\delta$ coordinates from each $\frac{k+h}{r}$ local groups, puncturing $\mathcal{C}$ in coordinates in $E$ yields a $[k+h, k]$ MDS code.
\eit
\end{defn}
$[k, r, h, \delta]$ local MRC is optimum with respect to minimum distance bound in \eqref{eq:dmin_rdelta}. The minimum distance of a $[k, r, h, \delta]$ local MRC is given by  
\begin{equation} \label{eq:dminLMRC}
d_{min} = h+ \delta + 1 + \floor[\Big]{\frac{h}{r}}\cdot \delta.
\end{equation}


Maximally recoverable codes with locality are also known in literature as Partial-MDS codes (PMDS) codes \cite{pmds}.  Constructions of PMDS codes with two and three global parities have been discussed in \cite{blaum2016construction, chen2015sector}. A general construction of PMDS codes based on linearized polynomials has been provided in \cite{calis2017general}. An improved construction of PMDS codes for all parameters over small field sizes has been presented in \cite{small}. Construction of MRCs  over small field sizes have been investigated in \cite{hu2016new, guruswami2018constructions}. Recently, construction of MRCs based on linearized Reed Solomon codes and skew polynomials have been studied in \cite{martinez2019universal, cai2020construction, gopi2020improved}.

\subsection{Codes with Hierarchical Locality}
The concept of \emph{locality} has been extended to hierarchical locality in \cite{hlocality}. In the case of $(r, \delta)$ locality, if there are more than $\delta$ erasures, then the code offers no locality. In the case of codes with hierarchical locality, the locality constraints are such that with the increase in the number of erasures, the locality increases in steps. The following is the definition of code with two-level hierarchical locality.

\begin{defn} \label{defn:hier_local}
An $[n, k, d_{min}]$ linear code $\mathcal{C}$ is a code with \emph{hierarchical locality} having parameters $[(r_1, \delta_1), (r_2, \delta_2)]$  if for every symbol $c_i$, $1 \leq i \leq n$, there exists a punctured code $\mathcal{C}_i$ such that $c_i \in Supp(C_i)$ and the following conditions hold, 
\begin{itemize}
    \item $dim(\mathcal{C}_i) \leq r_1$
    \item $d_{min}(\mathcal{C}_i) \geq \delta_1$ and
    \item $\mathcal{C}_i$ is a code with $(r_2, \delta_2)$ locality.
\end{itemize}
\end{defn}

An upper bound on the minimum distance of a code with two-level hierarchical locality is given by 
\begin{equation} \label{eqn:min_dist_bound}
    d \leq n - k + 1 -(\ceil[\Big]{\frac{k}{r_2}}-1)(\delta_2 - 1) - (\ceil[\Big]{\frac{k}{r_1}}-1)(\delta_1 - \delta_2).
\end{equation}

\subsection{Our Contributions}

In this work, we consider the locality constraints imposed by codes with two-level hierarchical locality and define maximally recoverable codes with data-local and local hierarchical locality. We prove that certain punctured codes of these codes are data-local/local MRCs. We derive the minimum distance of hierarchical data-local MRCs. We give a procedure to construct hierarchical data-local MRCs from hierarchical local MRCs. We provide a construction of hierarchical local MRCs for all parameters. We also give constructions of MRC with hierarchical locality for some parameters, whose field size is smaller than that of known constructions for general parameters.  We also derive a field size lower bound on MRC with hierarchical locality.

\subsection{Notation}
For any integer $n$, $[n] = \{1, 2, 3 \ldots, n\}$. For any $E \subseteq [n]$, $\bar{E} = [n]-E$. For any $[n,k]$ code, and any $E \subseteq [n]$, $\mathcal{C}|_E$ refers to the punctured code obtained by restricting $\mathcal{C}$ to the coordinates in $E$. This results in an $[n-|E|, k']$ code where $k' \leq k$. For any $m\times n$ matrix $H$ and $E \subseteq [n]$, $H|_E$ is the $m \times |E|$ matrix formed by restricting $H$ to columns indexed by $E$.
In several definitions to follow, we implicitly assume certain divisibility conditions which will be clear from the context.

\section{Maximally Recoverable Codes with Hierarchical Locality} \label{sec:HLMRC}

In this section, we define hierarchical data-local and local MRCs and illustrate the definitions through an example.

\begin{defn}[Hierarchical Data Local Code] \label{defn:HDLC}
We define a $ [k, r_1, r_2, h_1, h_2, \delta] $  hierarchical data local (HDL) code of length  \[n=k+h_1+\frac{k}{r_1}(h_2+\frac{r_1}{r_2}\delta)\] as follows:
\begin{itemize}
\item The code symbols $c_1, \ldots, c_n$ satisfy $h_1$ global parities given by $\sum_{j = 1}^n u_j^{(\ell)} c_j = 0, \ \ 1 \leq  \ell \leq h_1$.
\item The first $n-h_1$ code symbols are partitioned into $t_1 = \frac{k}{r_1}$ groups $A_i, 1 \leq i \leq t_1$ such that $|A_i| =  r_1 + h_2 + \frac{r_1}{r_2} \delta = n_1$. The code symbols in the $i^\text{th}$ group, $1 \leq i \leq t_1$ satisfy the following $h_2$  mid-level parities $\sum_{j = 1}^{n_1} v_{i,j}^{(\ell)} c_{(i-1)n_1+j} = 0, \ \ 1 \leq  \ell \leq h_2$.
\item The first $n_1-h_2$ code symbols of the $i^\text{th}$ group, $1 \leq i \leq t_1$ are partitioned into $t_2 = \frac{r_1}{r_2}$ groups $B_{i,s}, 1 \leq i \leq t_1, 1 \leq s \leq t_2$ such that $|B_{i,s}| =  r_2 + \delta = n_2$. The code symbols in the $(i,s)^\text{th}$ group, $1 \leq i \leq t_1, 1 \leq s \leq t_2$ satisfy the following $\delta$  local parities $\sum_{j = 1}^{n_2} w_{i,s,j}^{(\ell)} c_{(i-1)n_1+(s-1)n_2+j} = 0, \ \ 1 \leq  \ell \leq \delta$.
\end{itemize}
\end{defn}

\begin{defn}[Hierarchical Data Local MRC] \label{defn:HDLMRC}
Let $\mathcal{C}$ be a $[k, r_1, r_2, h_1, h_2, \delta]$ HDL code. Then $C$ is maximally recoverable if for any set $E \subset [n]$ such that $|E| = k+h_1$ and
\begin{enumerate}
    \item $E \bigcap B_{i, s} \leq r_2$ $\forall \ i, s$,
    \item $E \bigcap A_i  = r_1$ $\forall \ i$,
\end{enumerate}
the punctured code $\mathcal{C}|_E$ is a $[k+h_1,k,h_1+1]$ MDS code. 
\end{defn}

\begin{defn}[Hierarchical Local Code] \label{defn:HLC}
We define a $ [k, r_1, r_2, h_1, h_2, \delta] $ hierarchical local (HL) code of length  $n=k+h_1+\frac{k+h_1}{r_1}(h_2+\frac{r_1+h_2}{r_2}\delta)$ as follows:
\begin{itemize}
\item The code symbols $c_1, \ldots, c_n$ satisfy $h_1$ global parities given by $\sum_{j = 1}^n u_j^{(\ell)} c_j = 0, \ \ 1 \leq  \ell \leq h_1$.
\item The $n$ code symbols are partitioned into $t_1 = \frac{k+h_1}{r_1}$ groups $A_i, 1 \leq i \leq t_1$ such that $|A_i| = r_1 + h_2 + \frac{r_1+h_2}{r_2}\delta = n_1$. The code symbols in the $i^\text{th}$ group, $1 \leq i \leq t_1$ satisfy the following $h_2$  mid-level parities $\sum_{j = 1}^{n_1} v_{i,j}^{(\ell)} c_{(i-1)n_1+j} = 0, \ \ 1 \leq  \ell \leq h_2$.
\item The $n_1$ code symbols of the $i^\text{th}$ group, $1 \leq i \leq t_1$ are partitioned into $t_2 = \frac{r_1+h_2}{r_2}$ groups $B_{i,s}, 1 \leq i \leq t_1, 1 \leq s \leq t_2$ such that $|B_{i,s}| =  r_2 + \delta = n_2$. The code symbols in the $(i,s)^\text{th}$ group, $1 \leq i \leq t_1, 1 \leq s \leq t_2$ satisfy the following $\delta$  local parities $\sum_{j = 1}^{n_2} w_{i,s,j}^{(\ell)} c_{(i-1)n_1+(s-1)n_2+j} = 0, \ \ 1 \leq  \ell \leq \delta$.
\end{itemize}
\end{defn}

\begin{defn}[Hierarchical Local MRC] \label{defn:HLMRC}
Let $\mathcal{C}$ be a $[k, r_1, r_2, h_1, h_2, \delta]$ HL code. Then $C$ is maximally recoverable if for any set $E \subset [n]$ such that $|E| = k+h_1$ and
\begin{enumerate}
    \item $E \bigcap B_{i, s} \leq r_2$ $\forall \ i, s$,
    \item $E \bigcap A_i  = r_1$ $\forall \ i$,
\end{enumerate}
the punctured code $\mathcal{C}|_E$ is a $[k+h_1,k,h_1+1]$ MDS code. 
\end{defn}

In an independent parallel work \cite{martinez2019universal}, a class of MRCs known as multi-layer MRCs have been introduced. We would like to note that hierarchical local MRCs (given in Definition \ref{defn:HLMRC}) form a subclass of these multi-layer MRCs. One key difference between the codes constructed in \cite{martinez2019universal} and the current paper is that the authors in \cite{martinez2019universal} take the generator matrix based approach and we take the parity-check matrix based approach.

\begin{example}
We demonstrate the structure of the parity check matrix for an $[k=5, r_1=3, r_2=2, h_1=1, h_2=1 , \delta=2]$ HL code.
The length of the code is $n=k+h_1+\frac{k+h_1}{r_1}(h_2+\frac{r_1+h_2}{r_2}\delta) = 16$. The parity check matrix of the code is given below:


\[
	H = \begin{bmatrix}
	\begin{matrix}
		\begin{matrix}
		M_{1,1} \\
		& M_{1,2} \\
		\end{matrix} \\
	
		N_1 \\
	
		& \begin{matrix}
			M_{2,1} \\
			& M_{2,2} \\
		  \end{matrix} \\
	
		& N_2 \\
	\end{matrix}\\
	
	O \\
	\end{bmatrix}
\]

where,

\[
	M_{i,j} = \begin{bmatrix}
	w_{i,j,1}^{(1)} & w_{i,j,2}^{(1)} & w_{i,j,3}^{(1)} & w_{i,j,4}^{(1)} \\
	w_{i,j,1}^{(2)} & w_{i,j,2}^{(2)} & w_{i,j,3}^{(2)} & w_{i,j,4}^{(2)} \\
	\end{bmatrix}
\]
\[
	N_i = \begin{bmatrix}
	v_{i,1}^{(1)} & v_{i,2}^{(1)} & \ldots & v_{i,8}^{(1)} \\
	\end{bmatrix}
\]
\[
	O = \begin{bmatrix}
	u_1^{(1)} & u_2^{(1)} & \ldots & u_{16}^{(1)} \\
	\end{bmatrix}
\]

\end{example}


\begin{table*}[ht]
	\centering
		\begin{tabular}{|c|c|c|}
			\hline
			Reference    &  Parameters of HL-MRC & Field Size  \\
			\hline
			Construction 1 in \cite{martinez2019universal}     &   $[k,r_1,r_2,h_1,h_2, \delta]$     & $O(t_1^{r_1})$ \\
			\hline
			Construction \ref{constr:HLMRC} &    $[k,r_1,r_2,h_1,h_2, \delta]$ & $O(n_2 n_1^{(\delta+1)h_1-1}n^{(\delta+1)(h_2+1)h_1 - 1})$ \\
			in the current paper & & \\
			\hline
			Construction  \ref{constr:h1}  &  $[k,r_1,r_2,h_1=1,h_2, \delta]$   & $O(n_2 n_1^{(\delta+1)(h_2+1)-1})$\\
			in the current paper & & \\
			\hline
			Construction  \ref{constr:h11h21}  &  $[k,r_1,r_2,h_1=1,h_2=1, \delta]$   & $O(n_1)$\\
			in the current paper & & \\
			\hline			
			Construction  \ref{constr:h12h21}  &  $[k,r_1,r_2,h_1=2,h_2=1, \delta]$   & $O(n^4)$\\
			in the current paper & & \\
			\hline
		\end{tabular}
	\vspace*{0.2in}
	\caption{ Summary of HL-MRC constructions.}
	\label{tab:performance}
\end{table*}

\section{Properties of MRC with Hierarchical Locality}

In this section, we will derive two properties of MRC with hierarchical locality. We will show that the middle codes of a HDL/HL-MRC have to be data-local and local MRC respectively. Also, we derive the minimum distance of HDL MRC.

\begin{lem} \label{lem:midHDLMRC}
Consider a $[k, r_1, r_2, h_1, h_2, \delta]$ HDL-MRC $\mathcal{C}$. Let $A_i, 1 \leq i \leq t_1$ be the supports of the middle codes as defined in Definition \ref{defn:HDLC}. Then, for each $i$, $\mathcal{C}_{A_i}$ is a $[r_1, r_2, h_2, \delta]$ data-local MRC.
\end{lem}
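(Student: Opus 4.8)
The plan is to show that the restriction $\mathcal{C}_{A_i}$ satisfies Definition~\ref{defn:data_local} with parameters $[r_1, r_2, h_2, \delta]$. First I would verify the structural (non-MRC) requirements. By construction of the HDL code, the middle code on $A_i$ is the code on $n_1 = r_1 + h_2 + \frac{r_1}{r_2}\delta$ coordinates defined by the $h_2$ mid-level parities $v_{i,\cdot}^{(\ell)}$ together with the $\delta$ local parities $w_{i,s,\cdot}^{(\ell)}$ on each of the $t_2 = \frac{r_1}{r_2}$ blocks $B_{i,s}$; this is exactly the parity structure of a $[r_1, r_2, h_2, \delta]$ data-local code (the $B_{i,s}$ play the role of the $r_2$-groups with $\delta$ local parities each, and the $h_2$ mid-level parities play the role of the global parities). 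One should check the length bookkeeping: $n_1 = r_1 + \frac{r_1}{r_2}\delta + h_2$, matching the length formula $k' + \frac{k'}{r'}\delta + h'$ with $k' = r_1$, $r' = r_2$, $h' = h_2$. I also need $\dim(\mathcal{C}_{A_i}) = r_1$; this follows because the global code, when restricted to $A_i$, is punctured to a full-rank code — more precisely it will come out of the MDS property established next.

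The substantive part is the maximal recoverability condition: for any $E' \subseteq A_i$ obtained by choosing exactly $\delta$ coordinates from each block $B_{i,s}$ (so $|E'| = t_2\delta$), puncturing $\mathcal{C}_{A_i}$ at $E'$ yields a $[r_1 + h_2,\, r_1]$ MDS code. The key idea is to lift $E'$ to a global erasure pattern $E$ admissible for the HDL-MRC definition and then transfer the MDS conclusion back down. Concretely, I would take $E = \bigcup_j E_j$ where $E_i = $ (the chosen $\delta$-from-each-$B_{i,s}$ set inside $A_i$), extended to a set meeting $A_i$ in exactly $r_1$ complementary coordinates, and for every other group $A_j$, $j \neq i$, pick $E_j$ to be any $\delta$ coordinates from each $B_{j,s}$ — then $\bar E$ meets $A_j$ in exactly $r_1$ coordinates and meets each $B_{j,s}$ in $\le r_2$. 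A short count shows $|\bar E| = k + h_1$ and $\bar E$ satisfies conditions (1)–(2) of Definition~\ref{defn:HDLMRC}, so $\mathcal{C}|_{\bar E}$ is a $[k+h_1, k, h_1+1]$ MDS code. Restricting further to the coordinates of $\bar E$ lying in $A_i$ and using that MDS codes are MDS on every coordinate subset, $\mathcal{C}|_{\bar E \cap A_i}$ is MDS of length $|\bar E \cap A_i| = r_1$; I then need to argue this punctured code equals $\mathcal{C}_{A_i}|_{(\text{that set})}$, i.e. that the $A_j$-parities for $j\ne i$ and the top-level parities do not shrink the dimension observed on $A_i$-coordinates. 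This is where the genuine content lies.

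The main obstacle is precisely that last transfer step: relating $\dim\big(\mathcal{C}_{A_i}|_{S}\big)$ for $S \subseteq A_i$ to $\dim\big(\mathcal{C}|_{\bar E}|_{S}\big)$. The clean way is to argue via the parity-check matrix: the parity-check matrix of $\mathcal{C}_{A_i}$ consists of the mid-level rows $N_i$ and local rows $M_{i,s}$ restricted to $A_i$-columns, while that of $\mathcal{C}$ additionally has the top-level rows $O$ and the other groups' rows, all of which are zero on $A_i$-columns except $O$. Since for the chosen $\bar E$ the global code restricted to $\bar E$ is MDS, and the columns outside $A_i$ can be used to "absorb" the $h_1$ top rows, the effective constraints on the $r_1$ surviving columns of $A_i$ are exactly the $h_2$ mid-level parities — giving an $[r_1 + h_2, r_1]$ MDS code, which is the data-local MRC condition. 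I would formalize this by a rank/linear-algebra argument on $H|_{\bar E}$, peeling off the block structure; alternatively one can invoke a puncturing lemma for MRCs with nested locality if the paper has one available earlier, but since the excerpt does not, I would do the direct matrix computation. Along the way this also yields $\dim(\mathcal{C}_{A_i}) = r_1$, completing the verification.
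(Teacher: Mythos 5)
Your core idea is the same as the paper's: lift an admissible erasure pattern inside $A_i$ to a set $E$ satisfying the conditions of Definition~\ref{defn:HDLMRC} and invoke the MDS property of $\mathcal{C}|_E$. The difference is in the packaging: the paper runs the argument in the contrapositive -- if $\mathcal{C}_{A_i}$ were not a $[r_1,r_2,h_2,\delta]$ data-local MRC, there would be an $r_1$-subset $E'\subset A_i$ (inside an admissible punctured support, so meeting each $B_{i,s}$ in at most $r_2$ coordinates) with $\text{rank}(G|_{E'})<r_1$; extending $E'$ to an admissible $E$ then contradicts the fact that $\mathcal{C}|_E$ is a $[k+h_1,k]$ MDS code, since any $r_1\le k$ columns of a generator matrix of such a code are linearly independent. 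This formulation makes the ``transfer'' step you single out as the main obstacle evaporate: for any $T\subseteq A_i$ one has $\mathcal{C}_{A_i}|_{T}=\mathcal{C}|_{T}$ because projecting onto $T$ factors through projecting onto $A_i$, so ranks computed in the middle code and in the full code agree automatically, and no parity-check computation ``absorbing'' the $h_1$ global rows or the other groups' rows is needed (those rows simply do not enter $\text{rank}(G|_T)$). So your direct version is correct in outline, but the linear-algebra detour you propose for the last step is unnecessary; replacing it with this transitivity observation (or with the paper's contradiction) closes the argument, and the same observation gives $\dim(\mathcal{C}_{A_i})\ge r_1$, the reverse inequality coming from the $h_2+\frac{r_1}{r_2}\delta$ parities supported on $A_i$ -- a point that, as in your sketch, the paper also leaves implicit.
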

\begin{proof}
Suppose not. This means that for some $i$, the middle code $\mathcal{C}_{A_i}$ is not a $[r_1, r_2, h_2, \delta]$ data-local MRC. By the definition of data-local MRC, we have that there exists a set $E_1 \subset A_i$ such that $|E_1| = r_1 + h_2$ and $\mathcal{C}_{E_1}$ is not an $[r_1+h_2, r_2, h_2+1]$ MDS code. This implies that there exists a subset $E' \subset E_1$ such that $|E'| = r_1$ and $\text{rank}(G|_{E'}) < r_1$. We can extend the set $E'$ to obtain a set $E \subset [n]$, $|E| = k+h_1$ which satisfies the conditions in the definition of HDL-MRC. The resulting punctured code $\mathcal{C}_{E}$ cannot be MDS since there exists an $r_1 < k$ sized subset of $E$ such that $\text{rank}(G|_{E'}) < r_1$.
\end{proof}

\begin{lem} \label{lem:midHLMRC}
Consider a $[k, r_1, r_2, h_1, h_2, \delta]$ HL-MRC $\mathcal{C}$. Let $A_i, 1 \leq i \leq t_1$ be the supports of the middle codes as defined in Definition \ref{defn:HLC}. Then, for each $i$, $\mathcal{C}_{A_i}$ is a $[r_1, r_2, h_2, \delta]$ local MRC.
\end{lem}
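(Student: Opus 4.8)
The plan is to mimic the proof of Lemma~\ref{lem:midHDLMRC} almost verbatim, adjusting only the combinatorial bookkeeping that distinguishes the hierarchical \emph{local} setting from the hierarchical \emph{data-local} setting. I would argue by contradiction: suppose that for some index $i$ the middle code $\mathcal{C}_{A_i}$ is \emph{not} a $[r_1, r_2, h_2, \delta]$ local MRC. By Definition~\ref{defn:local}, applied with parameters $(k,r,h,\delta) \leftarrow (r_1, r_2, h_2, \delta)$, this means there is a set $E_1 \subset A_i$ obtained by picking $\delta$ coordinates from each of the $\frac{r_1+h_2}{r_2}$ local groups $B_{i,s}$ inside $A_i$ — so $|A_i| - |E_1| = r_1 + h_2$ — such that $\mathcal{C}_{A_i}|_{\bar{E_1}}$ (the puncturing at those $\delta$-per-group coordinates) fails to be an $[r_1+h_2, r_1, h_2+1]$ MDS code. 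Equivalently, writing $F = A_i \setminus E_1$ with $|F| = r_1 + h_2$, there is a subset $E' \subseteq F$ with $|E'| = r_1$ and $\mathrm{rank}(G|_{E'}) < r_1$, where $G$ is a generator matrix of $\mathcal{C}$.

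The next step is the extension argument: I would enlarge $E'$ to a global set $E \subset [n]$ with $|E| = k + h_1$ that satisfies the two conditions in Definition~\ref{defn:HLMRC}, namely $|E \cap B_{j,s}| \le r_2$ for all $(j,s)$ and $|E \cap A_j| = r_1$ for all $j$. Inside $A_i$ this is automatic: $E'$ already has size $r_1$ and, since $E'$ sits inside $F = A_i \setminus E_1$ where $E_1$ removed $\delta$ coordinates from each $B_{i,s}$, each $B_{i,s}$ retains at most $r_2$ coordinates of $E'$, so the local-intersection condition holds within $A_i$. For the other middle groups $A_j$, $j \ne i$, I simply choose $r_1$ coordinates that meet each $B_{j,s}$ in at most $r_2$ symbols (possible since $|B_{j,s}| = r_2 + \delta$ and $\sum_s r_2 = r_1 < n_1$); such selections always exist by a counting argument identical to the one implicitly used in Lemma~\ref{lem:midHDLMRC}. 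This yields a valid $E$, and then $\mathcal{C}|_E$ cannot be a $[k+h_1, k, h_1+1]$ MDS code because it contains an $r_1$-sized subset $E'$ (with $r_1 < k$) on which $G$ drops rank — contradicting the assumption that $\mathcal{C}$ is an HL-MRC.

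The main obstacle — really the only nontrivial point — is making the extension step fully rigorous, i.e.\ verifying that one can always complete $E'$ to a set $E$ of size exactly $k+h_1$ while simultaneously respecting $|E \cap A_j| = r_1$ for every $j$ and $|E \cap B_{j,s}| \le r_2$ for every $(j,s)$. The group sizes work out: there are $t_1 = \frac{k+h_1}{r_1}$ middle groups, so picking exactly $r_1$ coordinates from each gives $t_1 \cdot r_1 = k+h_1 = |E|$, and within each $A_j$ one distributes these $r_1$ picks as $r_2$ per local group across the $t_2 = \frac{r_1+h_2}{r_2}$ groups $B_{j,s}$, which is feasible since $r_2 \le r_2 + \delta = |B_{j,s}|$. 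One should also confirm that the particular group $A_i$ can accommodate the prescribed $E'$ together with the constraint structure — which it does, as noted above, because $E' \subseteq A_i \setminus E_1$ and $E_1$ was exactly a $\delta$-per-local-group removal. I would write this consistency check out in one or two sentences and otherwise keep the proof as terse as its HDL counterpart, since the logical skeleton is unchanged and only the definition being invoked (Definition~\ref{defn:local} in place of Definition~\ref{defn:data_local}) and the group cardinalities differ.
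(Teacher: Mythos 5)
Your proposal is correct and follows essentially the same route as the paper, which proves this lemma by adapting the contradiction-plus-extension argument of Lemma~\ref{lem:midHDLMRC}; the adjustments you make for the HL setting (taking $E_1$ to be the $\delta$-per-local-group removed set, requiring the retained $[r_1+h_2,r_1,h_2+1]$ code to be MDS, and using $t_1 r_1 = k+h_1$ for the size of $E$) are exactly the right ones. One cosmetic slip: in the HL setting $\sum_s r_2 = t_2 r_2 = r_1 + h_2$ rather than $r_1$, so in the groups $A_j$, $j\neq i$, you pick $r_1$ coordinates with \emph{at most} $r_2$ per local group (not exactly $r_2$ per group), which only makes the extension easier and does not affect the argument.
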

\begin{proof}
Proof is similar to the proof of Lemma \ref{lem:midHDLMRC}.
\end{proof}

\subsection{Minimum Distance of HDL-MRC}

\begin{lem}
The minimum distance of a $[k, r_1, r_2, h_1, h_2, \delta]$ HDL-MRC is given by $d = h_1 + h_2 + \delta +1$.
\end{lem}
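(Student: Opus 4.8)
The plan is to prove the claimed distance $d = h_1 + h_2 + \delta + 1$ by establishing a matching upper bound and lower bound. For the upper bound, I would specialize the general hierarchical-locality distance bound \eqref{eqn:min_dist_bound} to the parameters of a $[k, r_1, r_2, h_1, h_2, \delta]$ HDL code. Here the two local-level parameters are $(r_2, \delta_2) = (r_2, \delta+1)$ (each innermost group $B_{i,s}$ has $r_2$ information symbols and $\delta$ local parities, so its minimum distance is $\delta+1$), and the mid-level parameters are $(r_1, \delta_1) = (r_1, h_2 + \delta + 1)$ (the middle code on $A_i$ has dimension $r_1$ and, by Lemma \ref{lem:midHDLMRC}, is a $[r_1, r_2, h_2, \delta]$ data-local MRC, whose distance is $h_2 + \delta + 1$ by \eqref{eq:dminDLMRC}). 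Since $r_2 \mid r_1$ and $r_1 \mid k$, the ceilings in \eqref{eqn:min_dist_bound} become exact: $\lceil k/r_2 \rceil = k/r_2$ and $\lceil k/r_1 \rceil = k/r_1$. Substituting $n - k = h_1 + \frac{k}{r_1}(h_2 + \frac{r_1}{r_2}\delta)$ and simplifying, the telescoping of the two correction terms should collapse everything except $h_1 + h_2 + \delta + 1$; this is the routine calculation I would carry out explicitly.

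For the lower bound, the cleanest route is to exhibit, for every codeword support of size $< h_1 + h_2 + \delta + 1$, a puncturing that forces the nonzero codeword to vanish — equivalently, to show that any codeword of weight at most $h_1 + h_2 + \delta$ is zero. I would argue as follows: suppose $c$ is a nonzero codeword. Because each innermost group has $\delta$ local parities forming (the restriction of) an MDS-like constraint, if $c$ is supported on a single group $B_{i,s}$ together with few other coordinates, the local parities already force weight $\geq \delta + 1$ inside that group or propagate the support outward. More carefully, I would build a set $E$ of size $k + h_1$ satisfying the two conditions of Definition \ref{defn:HDLMRC} that contains $\mathrm{Supp}(c)$, and then invoke the MRC property: $\mathcal{C}|_E$ is a $[k+h_1, k, h_1+1]$ MDS code, so any nonzero codeword of $\mathcal{C}$ whose support lies in $E$ has weight $\geq h_1 + 1$ — but this alone only gives $h_1 + 1$, not the full bound. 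To get the extra $h_2 + \delta$, I would combine this with the local structure: use the fact (Lemma \ref{lem:midHDLMRC}) that each middle code $\mathcal{C}_{A_i}$ is a data-local MRC of distance $h_2 + \delta + 1$, so a codeword supported within a single $A_i$ already has weight $\geq h_2 + \delta + 1 > h_1 + h_2 + \delta$ unless it is zero there, and a codeword touching two or more middle groups must satisfy the $h_1$ global parities nontrivially across groups, contributing the additional $h_1$.

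The technical heart — and the step I expect to be the main obstacle — is the lower bound argument when the support of $c$ straddles several groups at each level simultaneously, so that neither the MRC/MDS property on a single $E$ nor the middle-code distance applies in isolation. The right bookkeeping is a shortening/residual argument: let $I$ be the set of middle groups $A_i$ met by $\mathrm{Supp}(c)$. If $|I| = 1$, the middle-code distance $h_2+\delta+1$ finishes it. If $|I| \geq 2$, I would shorten $\mathcal{C}$ by setting to zero the coordinates outside the groups in $I$ that are not needed, restrict to a well-chosen $E$ meeting the HDL-MRC conditions, and show that the punctured code restricted to $\bigcup_{i \in I} A_i$ behaves like a code with $|I|$ local groups (each contributing $\geq h_2 + \delta$ from its internal data-local structure, minus overlaps) plus $h_1$ global parities, yielding total weight at least $h_1 + h_2 + \delta + 1$; the Singleton-type accounting here mirrors the proof of \eqref{eqn:min_dist_bound} itself. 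Once the upper and lower bounds meet, the lemma follows.
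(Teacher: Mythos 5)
Your upper bound is essentially the paper's own argument: view the HDL-MRC as a code with hierarchical locality, set $\delta_2=\delta+1$ and $\delta_1=h_2+\delta+1$ (via Lemma \ref{lem:midHDLMRC} and \eqref{eq:dminDLMRC}), substitute the HDL length into \eqref{eqn:min_dist_bound}, and simplify to $d\leq h_1+h_2+\delta+1$. That part is correct and identical in route to the paper.

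The lower bound, however, has a genuine gap. First, your single-middle-group case rests on the inequality $h_2+\delta+1 > h_1+h_2+\delta$, which is false for every $h_1\geq 1$; so ``if $|I|=1$ the middle-code distance finishes it'' does not finish it -- a weight-$(h_2+\delta+1)$ codeword supported inside one $A_i$ would still violate the claimed distance, and excluding it requires the global parities together with the MRC property, not Lemma \ref{lem:midHDLMRC} alone. Second, the multi-group case, which you yourself identify as the main obstacle, is left as a sketch: the set $E\supseteq \mathrm{Supp}(c)$ satisfying Definition \ref{defn:HDLMRC} need not exist at all if the support places more than $r_2$ coordinates in some $B_{i,s}$ (or more than $r_1$ in some $A_i$), the MDS property of $\mathcal{C}|_E$ only yields weight $\geq h_1+1$, and no mechanism is given for adding this to the $h_2+\delta$ coming from the local structure. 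The paper sidesteps all of this weight bookkeeping by arguing erasure recovery instead: for any pattern of $h_1+h_2+\delta$ erasures, each middle code (a $[r_1,r_2,h_2,\delta]$ data-local MRC of distance $h_2+\delta+1$, by Lemma \ref{lem:midHDLMRC}) corrects up to $h_2+\delta$ erasures locally, and the at most $h_1$ remaining erasures are extended by $k$ non-erased coordinates to a set $E$ satisfying Definition \ref{defn:HDLMRC}, on which $\mathcal{C}|_E$ is a $[k+h_1,k,h_1+1]$ MDS code; hence every such pattern is correctable and $d\geq h_1+h_2+\delta+1$. Recasting your lower bound in this erasure-correction form (or carrying out the codeword-weight argument honestly for supports straddling several groups) is the missing step.
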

\begin{proof}
Based on the definition of HDL-MRC, it can be seen that the $[k, r_1, r_2, h_1, h_2, \delta]$ HDL-MRC is a code with hierarchical locality as per Definition \ref{defn:hier_local} with the following parameters:
\begin{itemize}
\item $k, r_1,r_2$ are the same.
\item $ \delta_2 -1 = \delta$, $\delta_1  =  h_2 + \delta +1$.
\item $n  =  k + h_1 + \frac{k}{r_1}(h_2 + \frac{r_1}{r_2}\delta)$.
\end{itemize}
Substituting these parameters in the minimum distance bound in \eqref{eqn:min_dist_bound}, we have that
\begin{equation}
d \leq h_1 + h_2 + \delta +1.
\end{equation}

By Lemma \ref{lem:midHDLMRC}, we know that $\mathcal{C}_{A_i}$ is a $[r_1, r_2, h_2, \delta]$ data-local MRC. The minimum distance of $\mathcal{C}_{A_i}$ (from \eqref{eq:dminLMRC}) is $h_2 + \delta +1$. Thus, the middle code itself can recover from any $h_2 + \delta$ erasures. The additional $h_1$ erasures can be shown to be extended to a set $E$ (consisting of $k$ additional non-erased symbols) which satisfies the conditions in Definition \ref{defn:HDLMRC}. Since, the punctured code $\mathcal{C}|_E$ is a $[k+h_1,k,h_1+1]$ MDS code, it can be used to recover the $h_1$ erasures. Hence, $[k, r_1, r_2, h_1, h_2, \delta]$ HDL-MRC can recover from any $h_1 + h_2 + \delta$ erasures.
\end{proof}
 

\subsection{Deriving HDL-MRC from HL-MRC}
In this section, we give a method to derive any HDL-MRC from a HL-MRC. Assume an $ [k, r_1, r_2, h_1, h_2, \delta] $ HL-MRC $\mathcal{C}$. Consider a particular set $E$ of $k+h_1$ symbols satisfying the conditions given in Definition \ref{defn:HLMRC}. We will refer to the elements of set $E$ as ``primary symbols". By the definition of HL-MRC, the code $\mathcal{C}$ when punctured to $E$ results in a $[k+h_1, k, h_1+1]$ MDS code. Hence, any $k$ subset of $E$ forms an information set. We will refer to the first $k$ symbols of $E$ as ``data symbols" and the rest $h_1$ symbols as global parities.
The symbols in $[n]\setminus E$ will be referred to as parity symbols (mid-level parities and local parities) and it can be observed that the parity symbols can be obtained as linear combinations of data symbols.
\begin{itemize}
    \item If $r_1 \mid h_1$ and $r_2 \mid h_2$
    \begin{enumerate}
        \item For $A_i, \frac{k}{r_1} < i \leq \frac{k+h_1}{r_1}$, drop all the parity symbols, including $h_2$ mid-level parities per $A_i$ as well as the $\delta$ local parities per $B_{i,s} \subset A_i$. As a result, we would be left with $h_1$ ``primary symbols" in the local groups $A_i, \frac{k}{r_1} < i \leq \frac{k+h_1}{r_1}$. These form the global parities of the HDL-MRC. This step ensures that mid-level and local parities formed from global parities are dropped.
        
                \item For each $B_{i, s},\: 1 \leq i \leq \frac{k}{r_1},\: s > \frac{r_1}{r_2}$, drop the $\delta$ local parities. This step ensures that local parities formed from mid-level parities are dropped.

    \end{enumerate}
    This results in an $ [k, r_1, r_2, h_1, h_2, \delta] $ HDL-MRC.
    
    \item If $r_1 \nmid h_1$ and $r_2 \mid h_2$,
    \begin{enumerate}
        \item From the groups $A_i, \lfloor\frac{k}{r_1}\rfloor + 1 < i \leq \frac{k+h_1}{r_1}$, drop all the parity symbols, including $h_2$ mid-level parities per $A_i$ as well as the $\delta$ local parities per $B_{i,s} \subset A_i$.
        \item For each $B_{i, s},\: 1 \leq i \leq \lfloor\frac{k}{r_1}\rfloor,\: s > \frac{r_1}{r_2}$, drop the $\delta$ local parities.

        \item Drop the $k - \lfloor\frac{k}{r_1}\rfloor r_1$ data symbols in $A_i, i = \lfloor\frac{k}{r_1}\rfloor + 1$ and recalculate all the parities (local, mid-level and global) by setting these data symbols as zero in the linear combinations.
    \end{enumerate}
    This results in an $ [\lfloor\frac{k}{r_1}\rfloor r_1, r_1, r_2, h_1, h_2, \delta] $ HDL-MRC.  
\end{itemize}
For the case of $r_2 \nmid h_2$, HDL-MRC can be derived from HL-MRC using similar techniques as above. Hence, in the rest of the paper, we will discuss the constructions of HL-MRC.

\section{General Construction of HL-MRC}

In this section, we will present a general construction of $ [k, r_1, r_2, h_1, h_2, \delta]$ HL-MRC. First, we will provide the structure of the code and then derive necessary and sufficient conditions for the code to be HL-MRC. Finally, we will apply a known result of BCH codes to complete the construction.

\begin{defn}
	A multiset $S \subseteq \mathbb{F}$ is $k$-wise independent over $\mathbb{F}$ if for every set $T \subseteq S$ such that $|T| \leq k$, $T$ is linearly independent over $\mathbb{F}$.
\end{defn}
\begin{lem} \label{lem:mds_linearized}
Let $\mathbb{F}_{q^m}$ be an extension of $\mathbb{F}_q$. Let $a_1, a_2, \ldots, a_n$ be elements of $\mathbb{F}_{q^m}$. The following matrix
\[
\begin{bmatrix}
a_1 & a_2 & a_3 & \ldots & a_n \\
a_1^{q} & a_2^{q} & a_3^{q} & \ldots & a_n^{q} \\
\vdots & \vdots & \vdots & \ldots & \vdots \\
a_1^{q^{k-1}} & a_2^{q^{n-1}} & a_3^{q^{n-1}} & \ldots & a_n^{q^{k-1}} \\
\end{bmatrix}
\]
is the generator matrix of a $[n,k]$ MDS code if and only if $a_1, a_2, \ldots, a_n$ are $k$-wise linearly independent over $\mathbb{F}_q$.
\end{lem}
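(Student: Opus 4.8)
The plan is to recognize the matrix in question as a Moore-type matrix and to relate its minors to the $q$-linear independence of the evaluation points $a_1,\dots,a_n$. Recall that for a vector $(a_{i_1},\dots,a_{i_k})$ of elements of $\mathbb{F}_{q^m}$, the $k\times k$ Moore determinant $\det\bigl(a_{i_j}^{q^{\ell}}\bigr)_{0\le \ell\le k-1,\,1\le j\le k}$ vanishes if and only if $a_{i_1},\dots,a_{i_k}$ are linearly dependent over $\mathbb{F}_q$ (this is the classical Moore determinant identity, which expresses the determinant as a product $\prod$ over nonzero $\mathbb{F}_q$-linear combinations of the $a_{i_j}$ up to scalars). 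A code with a $k\times n$ generator matrix is MDS if and only if every $k\times k$ submatrix is invertible, i.e. every set of $k$ columns is linearly independent. So I would translate ``MDS'' directly into ``every $k$-subset of the columns has nonzero Moore determinant.''

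First I would fix an arbitrary size-$k$ index set $T=\{i_1,\dots,i_k\}\subseteq[n]$ and consider the corresponding $k\times k$ submatrix $M_T$, whose $(\ell,j)$ entry is $a_{i_j}^{q^{\ell-1}}$ for $0\le \ell-1\le k-1$. Next I would invoke the Moore determinant formula to conclude $\det M_T\neq 0 \iff a_{i_1},\dots,a_{i_k}$ are $\mathbb{F}_q$-linearly independent. Then the equivalence is immediate in both directions: if $a_1,\dots,a_n$ are $k$-wise independent over $\mathbb{F}_q$, then every such $T$ gives $\det M_T\neq 0$, so the code is $[n,k]$ MDS; conversely, if the code is MDS, then for every $T$ of size $k$ we have $\det M_T\neq 0$, hence every $k$ of the $a_i$ are $\mathbb{F}_q$-independent, and for subsets of size less than $k$ independence follows a fortiori (any linearly dependent subset of size $<k$ could be padded to a dependent set of size $k$, or one simply notes a dependent subset of size $j<k$ forces rank deficiency). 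This also implicitly requires $n\le q^m$ and that the $a_i$ are distinct and nonzero, which are subsumed by $1$-wise and $2$-wise independence.

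The one genuine ingredient is the Moore determinant evaluation; I would either cite it as a standard fact about linearized (additive) polynomials — the matrix is the coefficient/evaluation matrix of the $q$-linearized polynomial interpolation problem — or sketch it by induction on $k$: performing the column operation that replaces $a_{i_k}$ by an arbitrary nonzero $\mathbb{F}_q$-combination, one peels off a factor corresponding to the $\mathbb{F}_q$-span of $a_{i_1},\dots,a_{i_{k-1}}$ inside that of all $k$ elements, reducing to the $(k-1)$-dimensional Moore determinant times a product of the new coset representatives. The main obstacle is thus purely expository: making the Moore determinant argument precise without a lengthy detour, which is why citing it is the cleaner route. (I note in passing a typo in the displayed matrix — the last row should read $a_1^{q^{k-1}},a_2^{q^{k-1}},\dots,a_n^{q^{k-1}}$ — but this does not affect the argument.)
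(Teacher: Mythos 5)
Your argument is correct, but it is not the route the paper takes: the paper's entire proof of this lemma is a one-line citation (``directly follows from Lemma 3 in \cite{small}''), so you are in effect supplying the proof that the paper outsources. Your reduction of the MDS property to the nonvanishing of every $k\times k$ minor, and the identification of those minors as Moore determinants whose nonvanishing is equivalent to $\mathbb{F}_q$-linear independence of the corresponding $a_{i_j}$, is the standard underlying argument, and your handling of both directions (including the remark that subsets of size less than $k$ are covered automatically, and that the displayed exponents in the statement contain typos) is sound. The only ingredient you do not fully prove is the Moore determinant evaluation itself; citing it is legitimate, and if you prefer to avoid the product formula entirely there is an even lighter alternative for the nontrivial direction: a dependence among $k$ columns gives a nonzero $q$-linearized polynomial $\sum_{\ell=0}^{k-1} c_\ell x^{q^\ell}$ vanishing on $a_{i_1},\dots,a_{i_k}$, hence on their $\mathbb{F}_q$-span; if these were $\mathbb{F}_q$-independent the span would have $q^k$ elements, exceeding the root bound $q^{k-1}$, a contradiction. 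In short, what the citation buys the paper is brevity; what your self-contained proof buys is transparency, at the cost of having to invoke (or sketch) the Moore identity.
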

\begin{proof}
Directly follows from Lemma 3 in \cite{small}.
\end{proof}

\begin{constr} \label{constr:HLMRC}
The structure of the parity check matrix of a $ [k, r_1, r_2, h_1, h_2, \delta]$ HL-MRC is given by
\[
H = \begin{bmatrix}
    H_0   & \\
          & H_0 & \\
          &     & \ddots & \\
          &     &        &  H_0 \\
    H_1  & H_2  & \hdots &  H_{t_1} \\
    \end{bmatrix}
\]
Here, $H_0$ is an $(t_2 \cdot \delta  + h_2)\times n_1 $ matrix and $H_i, 1\leq i \leq t_1$ are an $h_1\times n_1$ matrix. $H_0$ is then further subdivided as follows:

\[
H_0 = \begin{bmatrix}
    M_0   & \\
          & M_0 & \\
          &     & \ddots & \\
          &     &        &  M_0 \\
    M_1  & M_2  & \hdots &  M_{t_2} \\
        \end{bmatrix}
\]
$M_0$ has the dimensions $\delta \times n_2$ and $M_i, 1\leq i \leq t_2$ is an $h_2 \times n_2$ matrix.

Assume $q$ to be a prime power such that $q \geq n$, $\mathbb{F}_{q^{M_1}}$ be an extension field of $\mathbb{F}_q$ and $\mathbb{F}_{q^M}$ is an extension field of $\mathbb{F}_{q^{M_1}}$, where $M_1 \mid M$. 

In this case, the construction is given by the following.
\[
M_0 = \begin{bmatrix}
        1 & 1 & 1 & \ldots & 1 \\
        0 & \beta & \beta^2 & \ldots & \beta^{n_2 -1} \\
        0 & \beta^2 & \beta^4 & \ldots & \beta^{2(n_2-1)} \\
        \vdots & \vdots & \vdots & \ldots & \vdots \\
        0 & \beta^{\delta-1} & \beta^{2(\delta - 1)} & \ldots & \beta^{(\delta - 1)(n_2-1)}
        \end{bmatrix},
\]
where $\beta \in \mathbb{F}_q$ is a primitive element.
\[
M_i = \begin{bmatrix}
        \alpha_{i, 1} & \alpha_{i, 2} & \ldots & \alpha_{i, n_2} \\
        \alpha_{i, 1}^{q} & \alpha_{i, 2}^{q} & \ldots & \alpha_{i, n_2}^{q} \\
        \vdots & \vdots & \ldots & \vdots \\
        \alpha_{i, 1}^{q^{h_2-1}} &  \alpha_{i, 2}^{q^{h_2-1}} & \ldots &  \alpha_{i, n_2}^{q^{h_2-1}} \\
        \end{bmatrix},
\]
where $i \in [t_2]$, $ \alpha_{i,j} \in \mathbb{F}_{q^{M_1}}, 1 \leq i \leq t_2, 1 \leq j \leq n_2$.
\[
H_i = [H_{i,1} \ H_{i,2} \ldots H_{i,t_2}]
\]
\[
H_{i,s} = \begin{bmatrix}
        \lambda_{i,s,1} & \lambda_{i,s, 2} & \ldots & \lambda_{i, s,n_2} \\
        \lambda_{i, s,1}^{q^{M_1}} & \lambda_{i, s,2}^{q^{M_1}} & \ldots & \lambda_{i, s,n_2}^{q^{M_1}} \\
        \vdots & \vdots & \ldots & \vdots \\
        \lambda_{i, s,1}^{q^{M_1(h_1-1)}} &  \lambda_{i, s,2}^{q^{M_1(h_1-1)}} & \ldots &  \lambda_{i, s,n_2}^{q^{M_1(h_1-1)}} \\
        \end{bmatrix},
\]

where $i \in [t_1], s \in [t_2]$, $ \lambda_{i,s,j} \in \mathbb{F}_{q^{M}}, 1 \leq i \leq t_1, 1 \leq s \leq t_2, 1 \leq j \leq n_2$.


\end{constr}

A $(\delta,h_2)$ erasure pattern is defined by the following two sets: 
\begin{itemize}
\item $\Delta$ is a three dimensional array of indices with the first dimension $i$ indexing the middle code and hence $1 \leq I \leq t_1$, the second dimension $s$ indexing the local code and hence $1 \leq s \leq t_2$. The third dimension $j$ varies from $1$ to $\delta$ and used to index the $\delta$ coordinates which are erased in the $(i,s)^{\text{th}}$ group. Let $e \in [n]$ denote the actual index of the erased coordinate in the code and $e \in B_{i,s}$, then we set $\Delta_{i,s,j} = (e \mod n_2) + 1$. $\Delta_{i,s}$ is used to denote the vector of $\delta$ coordinates which are erased in the $(i,s)^{\text{th}}$ group. $\bar{\Delta}_{i,s}$ is used to denote the complement of $\Delta_{i,s}$ in the set $[n_2]$.
\item $\Gamma$ is a two dimensional array of indices with the first dimension $i$ indexing the middle code  and hence $1 \leq i \leq t_1$. The second dimension $j$ varies from $1$ to $h_2$ and used to index the additional $h_2$ coordinates which are erased in the $i^{\text{th}}$ group. Let $e \in [n]$ denote the actual index of the erased coordinate in the code and $e \in A_i$, then we set $\Gamma_{i,j} = (e \mod n_1) + 1$. $\Gamma_{i}$ is used to denote the vector of $h_2$ coordinates which are erased in the $i^{\text{th}}$ group. $\bar{\Gamma}_{i}$ is used to denote the complement of $\Gamma_{i}$ in the set $[n_1] \setminus (\cup_{s=1}^{t_2} \Delta_{i,s})$.
\end{itemize}
We define some matrices and sets based on the parameters of the construction, which will be useful in proving the subsequent necessary and sufficient condition for the construction to be HL-MRC.
\begin{eqnarray*}
L_{i,s} & = & (M_0|_{\Delta_{i,s}})^{-1}  M_0|_{\bar{\Delta}_{i,s}} \\
\Psi_i  & = & \{ \alpha_{s,\bar{\Delta}_{i,s}} + \alpha_{s,\Delta_{i,s}} L_{i,s}, 1 \leq s \leq t_2 \} \\
& = & \{ \Psi_{i,\Gamma_i}, \  \Psi_{i,\bar{\Gamma}_i} \} \\
& = & \{ \psi_{i,1}, \ldots, \psi_{i,h_2}, \psi_{i,h_2+1}, \ldots, \psi_{i, r_1+h_2} \}
\end{eqnarray*}
The above equalities follow by noting that the $\cup_{s=1}^{t_2} \bar{\Delta}_{i,s} = \Gamma_i \cup \bar{\Gamma}_i$.
We will refer to the elements in $\Psi_{i,\Gamma_i}$ by $ \{ \psi_{i,1}, \ldots, \psi_{i,h_2}\}$ and those in $\Psi_{i,\bar{\Gamma}_i}$ by $\{ \psi_{i,h_2+1}, \ldots, \psi_{i, r_1+h_2}\}$.
Consider the following matrix based on the elements of $\Psi_i$,
\begin{equation}
F_i = [F_i|_{\Gamma_i} \ \ F_i|_{\bar{\Gamma}_i}] = \begin{bmatrix}
        \psi_{i, 1} & \psi_{i, 2} & \ldots & \psi_{i, r_1+h_2} \\
        \psi_{i, 1}^{q} & \psi_{i, 2}^{q} & \ldots & \psi_{i, r_1+h_2}^{q} \\
        \vdots & \vdots & \ldots & \vdots \\
        \psi_{i, 1}^{q^{h_2-1}} &  \psi_{i, 2}^{q^{h_2-1}} & \ldots &  \psi_{i, r_1+h_2}^{q^{h_2-1}} \\
        \end{bmatrix},
\end{equation}
And
\begin{eqnarray*}
\Phi_i  & = & \{ \lambda_{i,s,\bar{\Delta}_{i,s}} + \lambda_{i,s,\Delta_{i,s}} L_{i,s}, 1 \leq s \leq t_2 \} \\
& = & \{ \Phi_{i,\Gamma_i}, \  \Phi_{i,\bar{\Gamma}_i} \} \\
& = & \{ \phi_{i,1}, \ldots, \phi_{i,h_2}, \phi_{i,h_2+1}, \ldots, \phi_{i, r_1+h_2} \}
\end{eqnarray*}

\begin{equation}
Z_i = (F_i|_{\Gamma_i})^{-1} F_i|_{\bar{\Gamma}_i}
\end{equation}
Finally, the set $\Theta = \{ \Phi_{i,\bar{\Gamma}_i} + \Phi_{i,\Gamma_i} Z_i, 1 \leq i \leq t_1 \}$.

\begin{thm} \label{thm:necsuf}
The code described in Construction \ref{constr:HLMRC} is a $[k, r_1, r_2, h_1, h_2, \delta]$ HL-MRC if and only if for any $(\delta,h_2)$ erasure pattern, the following two conditions are satisfied:
\begin{enumerate}
\item Each $\Psi_i, 1 \leq i \leq t_1$ is $h_2$-wise independent over $\mathbb{F}_q$.
\item $\Theta$ is $h_1$-wise independent over $\mathbb{F}_{q^{M_1}}$. 
\end{enumerate}
\end{thm}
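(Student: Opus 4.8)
The plan is to translate maximal recoverability into a determinant non-vanishing statement and then to triangularize the relevant submatrices of $H$ in three passes — local, then mid-level, then global — exploiting Frobenius-linearity at each pass to keep the surviving blocks in linearized-Vandermonde (Moore) form.

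\emph{Reduction to determinants.} Fix a set $E$ as in Definition~\ref{defn:HLMRC}. Since the $A_i$ partition $[n]$ and $|E\cap A_i|=r_1$, we have $|E|=k+h_1$ and $|\bar E|=n-k-h_1$. A standard fact is that the puncturing $\mathcal{C}|_E$ is a $[k+h_1,k,h_1+1]$ MDS code iff for every $h_1$-subset $D\subseteq E$ the square matrix $H|_{\bar E\cup D}$, of order $n-k$, is invertible. I would encode $\bar E$ as a $(\delta,h_2)$ pattern $(\Delta,\Gamma)$: inside each $B_{i,s}$ pick $\delta$ of the erased coordinates as $\Delta_{i,s}$, so $\bar E\cap A_i=(\bigcup_s\Delta_{i,s})\cup\Gamma_i$, $E\cap A_i=\bar\Gamma_i$, $E=\bigcup_i\bar\Gamma_i$, and $D\cap A_i\subseteq\bar\Gamma_i\subseteq\bigcup_s\bar\Delta_{i,s}$. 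Thus the columns of $H|_{\bar E\cup D}$ in group $A_i$ are exactly those indexed by $\bigcup_s\Delta_{i,s}$, by $\Gamma_i$, and by $D\cap A_i$.

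\emph{Three-pass triangularization.} First, use the $\delta$ local-parity rows of each $B_{i,s}$ to clear its $\Delta_{i,s}$ columns from the mid-level and global rows; the pivot block is $M_0|_{\Delta_{i,s}}$, invertible because $M_0$ is the parity-check matrix of an MDS code (as $\beta$ is primitive in $\mathbb{F}_q$ and $q\ge n$). The Schur complement replaces the $\bar\Delta_{i,s}$-columns of $M_s$ by $M_s|_{\bar\Delta_{i,s}}-M_s|_{\Delta_{i,s}}L_{i,s}$, and those of $H_{i,s}$ by $H_{i,s}|_{\bar\Delta_{i,s}}-H_{i,s}|_{\Delta_{i,s}}L_{i,s}$; since $L_{i,s}$ has entries in $\mathbb{F}_q$, it is fixed by $x\mapsto x^q$ and $x\mapsto x^{q^{M_1}}$, so by linearity of these maps both results are again Moore matrices (of step $q$, respectively $q^{M_1}$) whose defining rows are, up to sign, $\alpha_{s,\bar\Delta_{i,s}}+\alpha_{s,\Delta_{i,s}}L_{i,s}$ and $\lambda_{i,s,\bar\Delta_{i,s}}+\lambda_{i,s,\Delta_{i,s}}L_{i,s}$. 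Collecting over $s$, the mid-level rows of $A_i$ become the $h_2\times(r_1+h_2)$ Moore matrix $F_i$ on $\Psi_i$ and the global contribution of $A_i$ becomes a linearized-Vandermonde matrix on $\Phi_i$. Second, use the mid-level rows of $A_i$ to clear the $\Gamma_i$ columns from the global rows; the pivot is $F_i|_{\Gamma_i}$, a square Moore matrix on $\Psi_{i,\Gamma_i}$, and the Schur complement converts the global part into the Moore matrix (step $q^{M_1}$) on $\Phi_{i,\bar\Gamma_i}+\Phi_{i,\Gamma_i}Z_i$ — valid because $Z_i$ has entries in $\mathbb{F}_{q^{M_1}}$ (as $F_i$ does) and is therefore fixed by $x\mapsto x^{q^{M_1}}$. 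What remains is the $h_1\times(k+h_1)$ Moore matrix (step $q^{M_1}$) on $\Theta$, restricted to the columns $D$. Taking determinants of the resulting block-triangular matrix gives
\[
\det\big(H|_{\bar E\cup D}\big)=\pm\Big(\prod_{i,s}\det M_0|_{\Delta_{i,s}}\Big)\Big(\prod_i\det F_i|_{\Gamma_i}\Big)\det\big(\text{Moore matrix of }\Theta\text{ restricted to }D\big),
\]
where the first factor never vanishes.

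\emph{Landing on the two conditions, and the main obstacle.} By the Moore-determinant criterion (the square case of Lemma~\ref{lem:mds_linearized}), $\det F_i|_{\Gamma_i}\ne 0$ iff $\Psi_{i,\Gamma_i}$ is $\mathbb{F}_q$-linearly independent; and by Lemma~\ref{lem:mds_linearized} over the extension $\mathbb{F}_{q^M}/\mathbb{F}_{q^{M_1}}$ (legitimate since $M_1\mid M$), the last factor is nonzero for all $h_1$-subsets $D$ iff $\Theta$ is $h_1$-wise independent over $\mathbb{F}_{q^{M_1}}$. Hence for a fixed admissible $E$, $\mathcal{C}|_E$ is the required MDS code iff $\Psi_{i,\Gamma_i}$ is independent for every $i$ (which is also what makes $\Theta$ well-defined) and $\Theta$ is $h_1$-wise independent; reading these equivalences in reverse handles the converse direction. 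Letting $E$ range over all admissible sets, for each fixed $\Delta$ the set $\bigcup_s\bar\Delta_{i,s}$ is determined while $\Gamma_i$ ranges over all its $h_2$-subsets, so requiring $\Psi_{i,\Gamma_i}$ independent for all admissible $E$ is exactly requiring each $\Psi_i$ to be $h_2$-wise independent for every $(\delta,h_2)$ pattern — Condition 1 — and the $\Theta$-requirement over all admissible $E$ is Condition 2. (Signs are irrelevant throughout: negating an element of $\Psi_i$, $\Phi_i$, or $\Theta$ merely rescales a column of the associated Moore matrix.) I expect the main difficulty to be the bookkeeping in the triangularization — tracking the index sets and checking that each Schur complement remains Moore — which is precisely where the nested tower $\mathbb{F}_q\subseteq\mathbb{F}_{q^{M_1}}\subseteq\mathbb{F}_{q^M}$ is essential: $L_{i,s}\in\mathbb{F}_q$ must be compatible with both Frobenius steps $q$ and $q^{M_1}$, and $Z_i\in\mathbb{F}_{q^{M_1}}$ with the step $q^{M_1}$ of the global block; a mismatch between step sizes and subfields would break the Moore structure under folding and the whole argument would fail.
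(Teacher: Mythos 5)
Your proposal is correct and follows essentially the same route as the paper: staged row reduction (Schur complements) of the parity-check matrix, first with pivots $M_0|_{\Delta_{i,s}}$ and then with $F_i|_{\Gamma_i}$, using that $L_{i,s}$ has entries in $\mathbb{F}_q$ and $Z_i$ in $\mathbb{F}_{q^{M_1}}$ to keep the surviving blocks in Moore form, and finally invoking Lemma \ref{lem:mds_linearized} to obtain the two independence conditions. Your determinant formulation of the MDS requirement (via $H|_{\bar E\cup D}$ for all $h_1$-subsets $D$) is just a more explicit rendering of the paper's shortening-of-the-dual argument, and if anything is slightly more careful about why invertibility of each $F_i|_{\Gamma_i}$ is both needed and what makes $\Theta$ well-defined.
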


\begin{proof}
By Lemma \ref{lem:midHLMRC}, we have that $\mathcal{C}$ is a HL-MRC if and only if the $\mathcal{C}|_{A_i}$ is a $[r_1, r_2, h_2, \delta]$ local MRC.
By the definition of local MRC, a code is a $[r_1, r_2, h_2, \delta]$ local MRC, if after puncturing $\delta$ coordinates in each of the $\frac{r_1+h_2}{r_2}$ local groups, the resultant code is $[r_1+h_2, r_1, h_2+1]$ MDS code. 

The puncturing on a set of coordinates in the code is equivalent to shortening on the same set of coordinates in the dual code. Shortening on a set of coordinates in the dual code can be performed by zeroing the corresponding coordinates in the parity check matrix by row reduction. To prove that  $\mathcal{C}|_{A_i}$ is a $[r_1, r_2, h_2, \delta]$ local MRC, we need to show that certain punctured codes are MDS (Definition \ref{defn:local}). We will equivalently that the shortened codes of the dual code are MDS.

Consider the coordinates corresponding to $(i,s)^\text{th}$ group in the parity check matrix. The sub-matrix of interest in this case is the following:
\begin{equation*}
\left [\begin{array}{c|c}
M_0|_{\Delta_{i,s}} & M_0|_{\bar{\Delta}_{i,s}} \\
\hline
\alpha_{s,\Delta_{i,s}} & \alpha_{s,\bar{\Delta}_{i,s}} \\
\alpha_{s,\Delta_{i,s}}^q & \alpha_{s,\bar{\Delta}_{i,s}}^q \\
\vdots & \vdots \\
\alpha_{s,\Delta_{i,s}}^{q^{h_2-1}} & \alpha_{s,\bar{\Delta}_{i,s}}^{q^{h_2-1}}
\end{array} \right],
\end{equation*} 
Where $\alpha_{s,\Delta_{i,s}}^q$ is the vector obtained by taking $q^{\text{th}}$ power of each element in the vector.
Applying row reduction to the above matrix, we have
\begin{equation*}
\left [\begin{array}{c|c}
M_0|_{\Delta_{i,s}} & M_0|_{\bar{\Delta}_{i,s}} \\
\hline
\bold{0} & \alpha_{s,\bar{\Delta}_{i,s}} + \alpha_{s,\Delta_{i,s}} L_{i,s}\\
 \bold{0} & (\alpha_{s,\bar{\Delta}_{i,s}} + \alpha_{s,\Delta_{i,s}} L_{i,s})^q \\
\vdots & \vdots \\
\bold{0} & (\alpha_{s,\bar{\Delta}_{i,s}} + \alpha_{s,\Delta_{i,s}} L_{i,s})^{q^{h_2-1}}
\end{array} \right]
\end{equation*} 
Note that $L_{i,s}$ can be pushed into the power of $q$ since the elements of $L_{i,s}$ are in $\mathbb{F}_q$. After row reducing $\delta$ coordinates from each of the $\frac{r_1+h_2}{r_2}$ local groups in $A_i$, the resultant parity check matrix is $F_i$. Applying Lemma \ref{lem:mds_linearized}, $F_i$ forms the generator matrix of an MDS code if and only if the set $\Psi_i$ is $h_2$-wise independent over $\mathbb{F}_q$.
The shortening of the code above is applicable to mid-level parities. Now, we will apply similar shortening in two steps to global parities. The sub-matrix of interest in this case is the following:
\begin{equation*}
\left [\begin{array}{c|c}
M_0|_{\Delta_{i,s}} & M_0|_{\bar{\Delta}_{i,s}} \\
\hline
\alpha_{s,\Delta_{i,s}} & \alpha_{s,\bar{\Delta}_{i,s}} \\
\alpha_{s,\Delta_{i,s}}^q & \alpha_{s,\bar{\Delta}_{i,s}}^q \\
\vdots & \vdots \\
\alpha_{s,\Delta_{i,s}}^{q^{h_2-1}} & \alpha_{s,\bar{\Delta}_{i,s}}^{q^{h_2-1}} \\
\hline
\lambda_{i,s,\Delta_{i,s}} & \lambda_{i,s,\bar{\Delta}_{i,s}} \\
\lambda_{i,s,\Delta_{i,s}}^{q^{M_1}} & \lambda_{i,s,\bar{\Delta}_{i,s}}^{q^{M_1}} \\
\vdots & \vdots \\
\lambda_{i,s,\Delta_{i,s}}^{q^{M_1(h_1-1)}} & \lambda_{i,s,\bar{\Delta}_{i,s}}^{q^{M_1(h_1-1)}}
\end{array} \right]
\end{equation*} 
Applying row reduction to the above matrix, we have
\begin{equation*}
\left [\begin{array}{c|c}
M_0|_{\Delta_{i,s}} & M_0|_{\bar{\Delta}_{i,s}} \\
\hline
\bold{0} & \alpha_{s,\bar{\Delta}_{i,s}} + \alpha_{s,\Delta_{i,s}} L_{i,s}\\
 \bold{0} & (\alpha_{s,\bar{\Delta}_{i,s}} + \alpha_{s,\Delta_{i,s}} L_{i,s})^q \\
\vdots & \vdots \\
\bold{0} & (\alpha_{s,\bar{\Delta}_{i,s}} + \alpha_{s,\Delta_{i,s}} L_{i,s})^{q^{h_2-1}} \\
\hline
\bold{0} & \lambda_{i,s,\bar{\Delta}_{i,s}} + \lambda_{i,s,\Delta_{i,s}} L_{i,s}\\
 \bold{0} & (\lambda_{i,s,\bar{\Delta}_{i,s}} + \lambda_{i,s,\Delta_{i,s}} L_{i,s})^{q^{M_1}} \\
\vdots & \vdots \\
\bold{0} & (\lambda_{i,s,\bar{\Delta}_{i,s}} + \lambda_{i,s,\Delta_{i,s}} L_{i,s})^{q^{M_1(h_1-1)}}
\end{array} \right]
\end{equation*} 
To apply row reduction again, we consider the following submatrix obtained by deleting the zero columns and aggregating the non-zero columns from the $\frac{r_1+h_2}{r_2}$ groups, 
\begin{equation*}
\left [\begin{array}{c|c}
F_i|_{\Gamma_i} & F_i|_{\bar{\Gamma}_i} \\
\hline
\Phi_{i,\Gamma_i} & \Phi_{i,\bar{\Gamma}_i} \\
\Phi_{i,\Gamma_i}^{q^{M_1}} & \Phi_{i,\bar{\Gamma}_i}^{q^{M_1}} \\
\vdots & \vdots \\
\Phi_{i,\Gamma_i}^{q^{M_1(h_1-1)}} & \Phi_{i,\bar{\Gamma}_i}^{q^{M_1(h_1-1)}}
\end{array} \right]
\end{equation*}
Applying row reduction to the above matrix, we have
\begin{equation*}
\left [\begin{array}{c|c}
F_i|_{\Gamma_i} & F_i|_{\bar{\Gamma}_i} \\
\hline
\bold{0} & \Phi_{i,\bar{\Gamma}_i} + \Phi_{i,\Gamma_i} Z_i \\
\bold{0} & (\Phi_{i,\bar{\Gamma}_i} + \Phi_{i,\Gamma_i} Z_i)^{q^{M_1}} \\
\vdots & \vdots \\
\bold{0} & (\Phi_{i,\bar{\Gamma}_i} + \Phi_{i,\Gamma_i} Z_i)^{q^{M_1(h_1-1)}}
\end{array} \right]
\end{equation*}
Note that $Z_i$ can be pushed into the power of $q^{M_1}$ since the elements of $Z_i$ are in $\mathbb{F}_{q^{M_1}}$. Applying Lemma \ref{lem:mds_linearized}, the row reduced matrix above forms the generator matrix of an MDS code if and only if the set $\Theta$ is $h_1$-wise independent over $\mathbb{F}_{q^{M_1}}$.
 \end{proof}
 
 \begin{lem} \label{lem:hwise_ind}
 For any $(\delta,h_2)$ erasure pattern,
 \begin{itemize}
 \item For each $i$, $\Psi_i = \{ \alpha_{s,\bar{\Delta}_{i,s}} + \alpha_{s,\Delta_{i,s}} L_{i,s}, 1 \leq s \leq t_2 \}$ is $h_2$-wise independent over $\mathbb{F}_q$ if the set 
 \begin{equation*}
 \{ \alpha_{s,j}, 1 \leq s \leq t_2, 1 \leq j \leq n_2 \}
 \end{equation*}
 is $(\delta+1) h_2$-wise independent over $\mathbb{F}_q$.
 \item $\Theta = \{ \Phi_{i,\bar{\Gamma}_i} + \Phi_{i,\Gamma_i} Z_i, 1 \leq i \leq t_1 \}$ is $h_1$-wise independent over $\mathbb{F}_{q^{M_1}}$ if the set 
 \begin{equation*}
 \{ \lambda_{i,s,j}, 1 \leq i \leq t_1, 1 \leq s \leq t_2, 1 \leq j \leq n_2 \}
 \end{equation*}
 is $(\delta+1)(h_2+1)h_1$-wise independent over $\mathbb{F}_{q^{M_1}}$.
 \end{itemize}
 
 \end{lem}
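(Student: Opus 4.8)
The plan is to prove both bullets by the same template: write every element of the ``processed'' set ($\Psi_i$ in the first bullet, $\Theta$ in the second) as a \emph{non-trivial} linear combination, over the relevant subfield, of a bounded number of ``base'' elements ($\alpha_{s,j}$, resp.\ $\lambda_{i,s,j}$), track exactly which base elements can occur, and then use the assumed wise-independence of the base set to force every coefficient in a short linear dependence of the processed set to vanish. The key bookkeeping device is that combinations coming from different ``blocks'' (different $s$ in the first bullet, different $i$ in the second) involve disjoint base elements, and the number of blocks that actually contribute to a dependence is at most the size of the subset chosen.

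For the first bullet, note $L_{i,s}=(M_0|_{\Delta_{i,s}})^{-1}M_0|_{\bar\Delta_{i,s}}$ has all entries in $\mathbb{F}_q$ since $M_0$ is over $\mathbb{F}_q$; hence the $t$-th coordinate of $\alpha_{s,\bar\Delta_{i,s}}+\alpha_{s,\Delta_{i,s}}L_{i,s}$ equals $\alpha_{s,(\bar\Delta_{i,s})_t}+\sum_\ell (L_{i,s})_{\ell,t}\,\alpha_{s,(\Delta_{i,s})_\ell}$, an $\mathbb{F}_q$-combination of at most $\delta+1$ of the $\alpha_{s,j}$, using only indices from block $s$, in which the ``leading'' element $\alpha_{s,(\bar\Delta_{i,s})_t}$ occurs with coefficient $1$. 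Take $T\subseteq\Psi_i$ with $|T|\le h_2$ and a relation $\sum c\,\psi=0$ over $\mathbb{F}_q$; let $m_s$ be the number of elements of $T$ from block $s$. Block $s$ contributes at most $m_s+\delta$ distinct $\alpha_{s,j}$ (its $m_s$ leading indices in $\bar\Delta_{i,s}$ plus at most the $\delta$ indices in $\Delta_{i,s}$), and at most $|T|$ blocks contribute, so the relation involves at most $\sum_s(m_s+\delta)=|T|+\delta\cdot(\#\text{contributing blocks})\le(\delta+1)|T|\le(\delta+1)h_2$ of the $\alpha_{s,j}$. By hypothesis these are $\mathbb{F}_q$-independent; reading off the coefficient of each leading $\alpha_{s,(\bar\Delta_{i,s})_t}$, which occurs in exactly one $\psi$ of the relation and there with coefficient $1$, yields $c=0$ throughout, so $\Psi_i$ is $h_2$-wise independent.

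The second bullet is the same with one extra layer. Since $L_{i,s}\in\mathbb{F}_q$ and $Z_i$ has entries in $\mathbb{F}_{q^{M_1}}$ (being built from $q$-powers of the $\psi_{i,\cdot}\in\mathbb{F}_{q^{M_1}}$, which presumes $F_i|_{\Gamma_i}$ is invertible, i.e.\ the first bullet's conclusion holds), each coordinate of $\Phi_{i,\bar\Gamma_i}+\Phi_{i,\Gamma_i}Z_i$ is an $\mathbb{F}_{q^{M_1}}$-combination of the $r_1+h_2$ elements of $\Phi_i$, where the $\Phi_{i,\Gamma_i}$-part is common to all coordinates and the $\Phi_{i,\bar\Gamma_i}$-part is coordinate-specific; and each element of $\Phi_i$ is an $\mathbb{F}_q$-combination of at most $\delta+1$ of the $\lambda_{i,s,j}$, with its leading $\lambda_{i,s,(\bar\Delta_{i,s})_t}$ appearing with coefficient $1$. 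Hence for $S\subseteq\Theta$ with $|S|\le h_1$ and $m_i$ elements from row $i$, row $i$ contributes at most $(\delta+1)(m_i+h_2)$ distinct $\lambda_{i,s,j}$ ($(\delta+1)h_2$ from $\Phi_{i,\Gamma_i}$, plus $\delta+1$ per selected coordinate from $\Phi_{i,\bar\Gamma_i}$), and at most $|S|$ rows contribute, so a relation $\sum c\,\theta=0$ over $\mathbb{F}_{q^{M_1}}$ involves at most $\sum_i(\delta+1)(m_i+h_2)=(\delta+1)\bigl(|S|+h_2\cdot(\#\text{contributing rows})\bigr)\le(\delta+1)(h_2+1)h_1$ of the $\lambda_{i,s,j}$, which are $\mathbb{F}_{q^{M_1}}$-independent by hypothesis. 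The leading index $(\bar\Delta_{i,s})_t$ underlying a selected $\theta=(\Theta_i)_u$ lies in $\bar\Gamma_i$, hence outside $\Gamma_i$ and outside every $\Delta_{i,s}$, so $\lambda_{i,s,(\bar\Delta_{i,s})_t}$ cannot come from another coordinate of row $i$, nor from the shared $\Phi_{i,\Gamma_i}$-part, nor from any other row; its coefficient in the relation is thus $c$, forcing $c=0$, and $\Theta$ is $h_1$-wise independent.

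The main obstacle is entirely bookkeeping: making the ``private leading element'' argument airtight in the second bullet — i.e.\ checking that after the two successive block/row reductions the index $(\bar\Delta_{i,s})_t$ attached to a given entry of $\Theta$ really appears nowhere else, which hinges on keeping the roles of $\Gamma_i$ versus $\bar\Gamma_i$ and of $\Delta_{i,s}$ versus $\bar\Delta_{i,s}$ straight as index subsets of $[n_1]$ — and making the worst-case support count land exactly on $(\delta+1)h_2$ and $(\delta+1)(h_2+1)h_1$ via ``number of contributing blocks $\le$ size of the subset.'' Everything else reduces to the routine facts that linearized ($q$-power) operations and left-multiplication by matrices over the relevant subfield neither enlarge nor trivialize the base-element supports.
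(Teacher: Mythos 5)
Your proof is correct and takes essentially the same route as the paper: bound the number of distinct base elements ($\alpha_{s,j}$, resp.\ $\lambda_{i,s,j}$) that can appear when a linear combination of at most $h_2$ elements of $\Psi_i$, resp.\ at most $h_1$ elements of $\Theta$, is expanded --- at most $(\delta+1)h_2$, resp.\ $(\delta+1)(h_2+1)h_1$ --- and then invoke the assumed wise-independence of the base set. Your additional ``private leading element with coefficient $1$'' bookkeeping simply makes explicit the non-triviality of the induced dependence (a step the paper's terse proof leaves implicit), so it is a refinement of, not a departure from, the paper's argument.
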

 \begin{proof}
 Since the size of matrix $L_{i,s}$ is $\delta \times (n_2 - \delta)$, each element of $\Psi_i$ can be a $\mathbb{F}_q$-linear combination of atmost $\delta + 1$ different $\alpha_{s,j}$. Consider $\mathbb{F}_q$-linear combination of $h_2$ elements in $\Psi_i$.  The linear combination will have at most $(\delta + 1) h_2$ different $\alpha_{s,j}$. Thus, if the set $\{ \alpha_{s,j} \}$ is $(\delta+1) h_2$-wise independent over $\mathbb{F}_q$, then $\Psi_i$ is $h_2$-wise independent over $\mathbb{F}_q$. To prove the second part, we note that each element of $\Phi_i$ is a linear combination of at most $\delta + 1$ different $\lambda_{i,s,j}$. Since the size of the matrix $Z_i$ is $h_2 \times (n_1 - h_2)$, each element of $\Theta$ can be a $\mathbb{F}_{q^{M_1}}$-linear combination of atmost $(\delta+1)(h_2+1)$ different $\lambda_{i,s,j}$. Consider $\mathbb{F}_{q^{M_1}}$-linear combination of $h_1$ elements in $\Theta$.  The linear combination will have at most $(\delta + 1) (h_2+1)h_1$ different $\lambda_{i,s,j}$. Thus, if the set $\{ \lambda_{i,s,j} \}$ is $(\delta + 1) (h_2+1)h_1$-wise independent over $\mathbb{F}_{q^{M_1}}$, then $\Theta$ is $h_1$-wise independent over $\mathbb{F}_{q^{M_1}}$.
 \end{proof}
 
 We will design the $\{ \alpha_{s,j} \}$ and $\{ \lambda_{i,s,j} \}$ based on the Lemma \ref{lem:hwise_ind} so that the field size is minimum possible. We will pick these based on the following two properties:
 \begin{itemize}
 \item {\bf Property 1:} The columns of parity check matrix of an $[n,k,d]$ linear code over $\mathbb{F}_q$ can be interpreted as $n$ elements over $\mathbb{F}_{q^{n-k}}$ which are $(d-1)$-wise linear independent over $\mathbb{F}_q$.
 \item {\bf Property 2:} There exists $[n = q^m-1, k,d]$ BCH codes over $\mathbb{F}_q$ \cite{roth2006}, where the parameters are related as
 \begin{equation*}
 n-k  = 1 + \left \lceil \frac{q-1}{q} (d-2) \right \rceil  \lceil \log_2 (n)\rceil 
 \end{equation*}
 \end{itemize}
 
 \begin{thm} \label{thm:HLBCH}
 The code in Construction \ref{constr:HLMRC} is a $ [k, r_1, r_2, h_1, h_2, \delta]$ HL-MRC if the parameters are picked as follows:
 \begin{enumerate}
\item $q$ is the smallest prime power greater than $n_2$.
\item $M_1$ is chosen based on the following relation:
\begin{equation*}
M_1 = 1 + \left \lceil \frac{q-1}{q} ((\delta+1)h_2-1) \right \rceil  \lceil \log_q (n_2 t_2)\rceil 
\end{equation*}
\item $n_2 t_2$ elements $\{ \alpha_{s,j} \}$ over $\mathbb{F}_{q^{M_1}}$ are set to be the columns of parity check matrix of the BCH code over $\mathbb{F}_q$ with parameters $[n = q^{\lceil \log_q (n_2 t_2) \rceil}-1, q^{\lceil \log_q (n_2 t_2)\rceil}-1-M_1, (\delta+1)h_2+1]$ .
\item $M$ is chosen to be the smallest integer dividing $M_1$ based on the following relation:
\begin{equation*}
M \geq 1 + \left \lceil \frac{q^{M_1}-1}{q^{M_1}} ((\delta+1)(h_2+1)h_1-1) \right \rceil \lceil \log_{q^{M_1}} (n)\rceil  
\end{equation*}
\item $n$ elements $\{ \lambda_{i,s,j} \}$ over $\mathbb{F}_{q^{M}}$ are set to be the columns of parity check matrix of the BCH code over $\mathbb{F}_{q^{M_1}}$ with parameters $[n = q^{M_1\lceil \log_{q^{M_1}} (n) \rceil}-1, q^{M_1\lceil \log_{q^{M_1}} (n) \rceil}-1-M, (\delta+1)(h_2+1)h_1+1]$ .
\end{enumerate}
 \end{thm}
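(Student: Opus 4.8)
The plan is to chain the two structural results already established. By Theorem~\ref{thm:necsuf}, it suffices to guarantee that for every $(\delta,h_2)$ erasure pattern the set $\Psi_i$ is $h_2$-wise independent over $\mathbb{F}_q$ for each $i$, and $\Theta$ is $h_1$-wise independent over $\mathbb{F}_{q^{M_1}}$. Lemma~\ref{lem:hwise_ind} reduces these two conditions, uniformly over all erasure patterns, to two pattern-independent statements about the base elements of the construction: that $\{\alpha_{s,j} : 1\le s\le t_2,\ 1\le j\le n_2\}$ is $(\delta+1)h_2$-wise independent over $\mathbb{F}_q$, and that $\{\lambda_{i,s,j} : 1\le i\le t_1,\ 1\le s\le t_2,\ 1\le j\le n_2\}$ is $(\delta+1)(h_2+1)h_1$-wise independent over $\mathbb{F}_{q^{M_1}}$. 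So the whole argument comes down to exhibiting two such families over fields of the claimed sizes, which is precisely what items~1--5 do.

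For the $\alpha$'s: we need $n_2 t_2$ elements of some extension of $\mathbb{F}_q$ that are $(\delta+1)h_2$-wise independent over $\mathbb{F}_q$, with the extension degree as small as possible. By Property~1, the columns of a parity-check matrix of any $[N,\,N-M_1,\,D]$ code over $\mathbb{F}_q$ with $N\ge n_2 t_2$ and $D-1\ge (\delta+1)h_2$ furnish $N$ elements of $\mathbb{F}_{q^{M_1}}$ that are $(\delta+1)h_2$-wise independent over $\mathbb{F}_q$, and we take the first $n_2 t_2$ of them as the $\alpha_{s,j}$. Choosing $q$ to be the smallest prime power exceeding $n_2$ guarantees that $\mathbb{F}_q$ contains a primitive element $\beta$ whose powers $1,\beta,\ldots,\beta^{n_2-1}$ are distinct, so that the block $M_0$ is the parity-check matrix of an MDS local code; invoking Property~2 with a BCH code over $\mathbb{F}_q$ of length $q^{\lceil\log_q(n_2 t_2)\rceil}-1$ (chosen so as to be at least $n_2 t_2$) and designed distance $(\delta+1)h_2+1$ makes the redundancy exactly the value of $M_1$ prescribed in item~2. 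This settles the first independence requirement.

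For the $\lambda$'s the argument is identical, but now the ground field is $\mathbb{F}_{q^{M_1}}$: we need $n$ elements that are $(\delta+1)(h_2+1)h_1$-wise independent over $\mathbb{F}_{q^{M_1}}$, so by Properties~1 and~2 we take a BCH code over $\mathbb{F}_{q^{M_1}}$ of length $q^{M_1\lceil\log_{q^{M_1}}(n)\rceil}-1$ (again at least $n$) and designed distance $(\delta+1)(h_2+1)h_1+1$; its parity-check columns lie in $\mathbb{F}_{q^{M}}$ with $M$ as in item~4, and we enlarge $M$ to the least multiple of $M_1$ satisfying that bound, so that $M_1\mid M$ as required in Construction~\ref{constr:HLMRC} and consequently $\mathbb{F}_{q^{M_1}}\subseteq\mathbb{F}_{q^M}$, keeping the $\alpha_{s,j}$ valid field entries. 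Taking the first $n$ of these columns as the $\lambda_{i,s,j}$ and feeding both families back through Lemma~\ref{lem:hwise_ind} and then Theorem~\ref{thm:necsuf} proves that the resulting code is a $[k,r_1,r_2,h_1,h_2,\delta]$ HL-MRC.

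The conceptually delicate point -- that the independence must hold simultaneously for all $(\delta,h_2)$ erasure patterns, even though $\Psi_i$ and $\Theta$ depend on the pattern through the matrices $L_{i,s}$ and $Z_i$ -- is already absorbed by Lemma~\ref{lem:hwise_ind}: each element of $\Psi_i$ (respectively $\Theta$) is, regardless of the pattern, an $\mathbb{F}_q$- (respectively $\mathbb{F}_{q^{M_1}}$-) linear combination of at most $\delta+1$ (respectively $(\delta+1)(h_2+1)$) of the base elements, so a wise-independence bound on the base set transfers to all patterns at once. Hence the only remaining work is bookkeeping: verifying that the ceilings in items~2 and~4 make the chosen BCH lengths reach $n_2 t_2$ and $n$, that $q>n_2$ indeed forces $M_0$ to define an MDS code, and that the rounding of $M$ preserves $M_1\mid M$. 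I expect this last cluster of length/divisibility checks to be the only friction in writing out the full proof; the substance is carried entirely by Theorem~\ref{thm:necsuf} and Lemma~\ref{lem:hwise_ind}, both already available.
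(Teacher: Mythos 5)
Your proposal is correct and follows exactly the route the paper intends: Theorem \ref{thm:necsuf} reduces maximal recoverability to the two wise-independence conditions, Lemma \ref{lem:hwise_ind} converts these to pattern-independent independence requirements on $\{\alpha_{s,j}\}$ and $\{\lambda_{i,s,j}\}$, and Properties 1 and 2 (BCH parity-check columns) supply such families with the stated field sizes --- which is precisely the paper's one-line proof, with your write-up merely making the bookkeeping (lengths, $M_1\mid M$, the MDS role of $M_0$) explicit.
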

\begin{proof}
The proof follows from Lemma \ref{lem:hwise_ind} and Properties 1 and 2.
\end{proof}
%
%
%
%
%
%

\section{Constructions of HL MRC over Low Field Sizes}

In this section, we give three constructions of HL MRC. The first one is for the case when $h_1=1$, the second construction is for the case when both $h_1 = 1$ and $h_2=1$ and the third one is for the case when $h_1=2$ and $h_2 = 1$.

  The second construction is for the case when $h_1=2$ and $h_2=1$. This construction is based on the construction of local MRC with $3$ global parities in \cite{gopi2020maximally}.

\subsection{HL-MRC Construction for $h_1 = 1$}

In this section, we present a construction of HL-MRC for the case when $h_1=1$ over a field size lower than that provided by Construction \ref{constr:HLMRC}.

\begin{constr}\label{constr:h1}
	The structure of the parity check matrix for the present construction is the same as that given in Construction \ref{constr:HLMRC}. In addition, the matrices $M_0$ and $M_i,\: 1 \leq i \leq t_2$ also remain the same. We modify the matrix $H_i, \; 1 \leq i \leq t_1$ as follows:
	\[ H_i = \begin{bmatrix}
	\alpha_{1, 1}^{q^{h_2}} & \alpha_{1, 2}^{q^{h_2}} & \ldots \alpha_{t_2, n_2}^{q^{h_2}}
	\end{bmatrix},
	\]
where $\{ \alpha_{s,j} \in \mathbb{F}_{q^{M_1}}, 1 \leq s \leq t_2, 1 \leq j \leq n_2 \}$ are chosen to be $(\delta+1) \cdot (h_2+1)$-wise independent over $\mathbb{F}_q$ based on Theorem \ref{thm:HLBCH}.

\end{constr}

\begin{thm}
The code $C$ given by Construction \ref{constr:h1} is a $ [k, r_1, r_2, h_1=1, h_2, \delta] $ HL-MRC.
\end{thm}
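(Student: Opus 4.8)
Since the parity-check structure, $M_0$, and the $M_i$ are unchanged from Construction \ref{constr:HLMRC}, the first condition — that each $\Psi_i$ is $h_2$-wise independent over $\mathbb{F}_q$ — holds by exactly the same argument as before: by Lemma \ref{lem:hwise_ind}, it suffices that $\{\alpha_{s,j}\}$ be $(\delta+1)h_2$-wise independent over $\mathbb{F}_q$, and this is implied by the $(\delta+1)(h_2+1)$-wise independence guaranteed by the choice in Construction \ref{constr:h1}. So the only thing that needs genuine attention is the second condition, which for $h_1 = 1$ degenerates: $\Theta$ being $1$-wise independent over $\mathbb{F}_{q^{M_1}}$ just means every element of $\Theta$ is nonzero.

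\textbf{Key steps, in order.} First I would trace through the row-reduction argument in the proof of Theorem \ref{thm:necsuf} with the modified $H_i$. Because $H_i$ is now the single row $[\alpha_{1,1}^{q^{h_2}}, \ldots, \alpha_{t_2,n_2}^{q^{h_2}}]$, the vector $\Phi_i$ produced after eliminating the $\Delta_{i,s}$-columns is exactly $\{(\alpha_{s,\bar\Delta_{i,s}} + \alpha_{s,\Delta_{i,s}} L_{i,s})^{q^{h_2}}, 1 \le s \le t_2\}$, i.e. the coordinatewise $q^{h_2}$-th power of the vector whose entries are $\psi_{i,1},\ldots,\psi_{i,r_1+h_2}$ — here one uses that $L_{i,s}$ has entries in $\mathbb{F}_q$ to push the $q^{h_2}$-th power through. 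Thus $\Phi_{i,\Gamma_i} = (\psi_{i,1}^{q^{h_2}},\ldots,\psi_{i,h_2}^{q^{h_2}})$ and $\Phi_{i,\bar\Gamma_i} = (\psi_{i,h_2+1}^{q^{h_2}},\ldots,\psi_{i,r_1+h_2}^{q^{h_2}})$. Second, I would write out an element of $\Theta$, namely a coordinate of $\Phi_{i,\bar\Gamma_i} + \Phi_{i,\Gamma_i} Z_i$ where $Z_i = (F_i|_{\Gamma_i})^{-1} F_i|_{\bar\Gamma_i}$, and observe that it equals a Schur-complement-type expression: expanding $F_i|_{\bar\Gamma_i}$ in terms of $F_i|_{\Gamma_i}$, the $j$-th such coordinate is $\det$ of an $(h_2+1)\times(h_2+1)$ matrix formed from the rows $\psi^{q^t}$, $0 \le t \le h_2$, on the columns $\Gamma_i \cup \{h_2+j\}$, divided by $\det F_i|_{\Gamma_i}$. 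This determinant is a (generalized) Moore/linearized-Vandermonde determinant in $\psi_{i,1},\ldots,\psi_{i,h_2},\psi_{i,h_2+j}$, which is nonzero precisely when these $h_2+1$ elements are $\mathbb{F}_q$-linearly independent — i.e. when $\{\psi_{i,1},\ldots,\psi_{i,h_2},\psi_{i,h_2+j}\}$, a subset of $\Psi_i$ of size $h_2+1$, is $\mathbb{F}_q$-linearly independent. Third, I would argue this independence follows from the chosen hypothesis: each such $\psi$ is an $\mathbb{F}_q$-combination of at most $\delta+1$ of the $\alpha_{s,j}$, so $h_2+1$ of them involve at most $(\delta+1)(h_2+1)$ distinct $\alpha_{s,j}$, and $(\delta+1)(h_2+1)$-wise independence of $\{\alpha_{s,j}\}$ over $\mathbb{F}_q$ gives the claim — and moreover the denominator $\det F_i|_{\Gamma_i}$ is nonzero for the same reason (only $h_2$ of the $\psi$'s, hence $\le (\delta+1)h_2$ of the $\alpha$'s). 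Hence every element of $\Theta$ is nonzero, condition (2) holds, and the code is HL-MRC by Theorem \ref{thm:necsuf}.

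\textbf{Main obstacle.} The delicate point is the second step: showing that a coordinate of $\Phi_{i,\bar\Gamma_i} + \Phi_{i,\Gamma_i}Z_i$ is nonzero is \emph{not} immediate from the nonsingularity of $F_i$ (which only gives that $\Psi_i$ is $h_2$-wise independent), because one needs the size-$(h_2+1)$ subsets $\{\psi_{i,1},\ldots,\psi_{i,h_2},\psi_{i,h_2+j}\}$ to be independent, which is genuinely stronger than the first condition provides. The clean way around this is to not invoke the first condition at all for part (2), but to go back to the $\alpha$'s: both the specific $(h_2+1)\times(h_2+1)$ linearized Vandermonde determinants in the numerator and the $h_2\times h_2$ one in the denominator are nonzero because the relevant handful of $\psi$'s pull back to a bounded number of $\alpha_{s,j}$'s that are jointly $\mathbb{F}_q$-independent by the BCH-based choice. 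I would make sure to state explicitly that $\delta = \delta_2 - 1$ and the divisibility conventions are as in the running assumptions, and that the field $\mathbb{F}_{q^{M_1}}$ suffices — no extension to $\mathbb{F}_{q^M}$ is needed since $h_1 = 1$ — which is exactly the field-size saving claimed in Table \ref{tab:performance}.
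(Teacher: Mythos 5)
Your proposal is correct, and it rests on the same core observation as the paper: because the single global-parity row is the $q^{h_2}$-th power row $\alpha_{s,j}^{q^{h_2}}$, it extends the Moore structure of the mid-level rows, and after the first row reduction everything reduces to $(h_2+1)$-wise $\mathbb{F}_q$-independence of $\Psi_i$, which is guaranteed by the $(\delta+1)(h_2+1)$-wise independence of the $\alpha_{s,j}$ (the Lemma \ref{lem:hwise_ind}-type counting you give). The difference is one of packaging: the paper does not go through condition (2) of Theorem \ref{thm:necsuf} at all; it simply notes that, for the group containing the extra erasure, the row-reduced parity matrix is the $(h_2+1)\times(r_1+h_2)$ Moore matrix in the $\psi_{i,j}$ (the matrix $F_i$ augmented by the row $\psi^{q^{h_2}}$), and applies Lemma \ref{lem:mds_linearized} once to conclude MDS-ness. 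You instead instantiate Theorem \ref{thm:necsuf} with $M=M_1$, $\lambda_{i,s,j}=\alpha_{s,j}^{q^{h_2}}$, observe $\Phi_i=\Psi_i^{q^{h_2}}$, and show each element of $\Theta$ is a Schur-complement ratio of a $(h_2+1)\times(h_2+1)$ Moore determinant over $\det F_i|_{\Gamma_i}$, hence nonzero under the same independence hypothesis; this is slightly longer but makes explicit how the general necessary-and-sufficient framework degenerates when $h_1=1$, and you correctly flag (as the paper's table does) that no extension beyond $\mathbb{F}_{q^{M_1}}$ is needed. Both arguments are sound and arrive at the identical criterion, so there is no gap; your determinant-ratio step is just a reproof, in the $h_1=1$ case, of what the paper gets directly by stacking the extra row onto $F_i$ before invoking Lemma \ref{lem:mds_linearized}.
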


\begin{proof}
We show that $H$ can be used to correct all erasure patterns defined in Definition \ref{defn:HLMRC}. From the definition the code should recover from:
\begin{enumerate}
	\item $\delta$ errors per $B_{i,s}$ 
	\item $h_2$ additional errors per $A_i$
	\item $1$ more erasure anywhere in the entire code.
\end{enumerate}


Now, with $h_1=1$, the last erasure can be part of one group. Thus, effectively the code should recover from $h_2+1$ erasures per group. Suppose that the last erasure is in the $i^{\text{th}}$ group. The submatrix of interest for the $(i,s)^{th}$ local group is
\begin{equation*}
\left [\begin{array}{c|c}
M_0|_{\Delta_{i,s}} & M_0|_{\bar{\Delta}_{i,s}} \\
\hline
\alpha_{s,\Delta_{i,s}} & \alpha_{s,\bar{\Delta}_{i,s}} \\
\alpha_{s,\Delta_{i,s}}^q & \alpha_{s,\bar{\Delta}_{i,s}}^q \\
\vdots & \vdots \\
\alpha_{s,\Delta_{i,s}}^{q^{h_2-1}} & \alpha_{s,\bar{\Delta}_{i,s}}^{q^{h_2-1}} \\
\hline
\alpha_{s,\Delta_{i,s}}^{q^{h_2}} & \alpha_{s,\bar{\Delta}_{i,s}}^{q^{h_2}} \\
\end{array} \right]
\end{equation*}

Following the proof of Theorem \ref{thm:necsuf} and performing row reduction of $\delta$ coordinates, the resultant matrix  is
\[
\begin{bmatrix}
	\psi_{i, 1} & \psi_{i, 2} & \ldots & \psi_{i, r_1+h_2} \\
	\psi_{i, 1}^{q} & \psi_{i, 2}^{q} & \ldots & \psi_{i, r_1+h_2}^{q} \\
	\vdots & \vdots & \ldots & \vdots \\
	\psi_{i, 1}^{q^{h_2-1}} &  \psi_{i, 2}^{q^{h_2-1}} & \ldots &  \psi_{i, r_1+h_2}^{q^{h_2-1}} \\
	\psi_{i, 1}^{q^{h_2}} &  \psi_{i, 2}^{q^{h_2}} & \ldots &  \psi_{i, r_1+h_2}^{q^{h_2}} \\
\end{bmatrix}
\]	
Now, by Lemma \ref{lem:mds_linearized}, it is the generator matrix of an MDS code if and only if $\Psi_i$ is $(h_2+1)$-wise independent over $\mathbb{F}_q$.
\end{proof}

\subsection{Construction of HL MRC with $h_1=1$ and $h_2=1$}
Now we will describe the construction for the case when there is one mid-level parity per mid-level code ($h_2 = 1$) and one global parity ($h_1=1$). This construction is based on the construction of local MRC with $2$ global parities in \cite{gopi2020maximally}.

\begin{constr} \label{constr:h11h21}
We give a construction of the code $\mathcal{C}$, which is specified by the following parity-check matrix $H$:
\[
	H = \begin{bmatrix}
		H_0 &                        \\
		    & H_0 &                  \\
		    &     & \ddots &         \\
		    &     &        & H_0     \\
		H_1 & H_2 & \hdots & H_{t_1} \\
	\end{bmatrix} \ \ \ \	H_0 = \begin{bmatrix}
		M_0 &                        \\
		    & M_0 &                  \\
		    &     & \ddots &         \\
		    &     &        & M_0     \\
		M_1 & M_2 & \hdots & M_{t_2} \\
	\end{bmatrix}
\]
\[
	M_0 = \begin{bmatrix}
		\A_1       & \A_2       & \ldots & \A_{n_2}       \\
		\A_1^2     & \A_2^2     & \ldots & \A_{n_2}^2     \\
		\vdots     & \vdots     & \ddots & \vdots         \\
		\A_1^{\dl} & \A_2^{\dl} & \ldots & \A_{n_2}^{\dl} \\
	\end{bmatrix}
	\ \ \ \ \ \ \
	M_i = \begin{bmatrix}
		\La_i & \La_i & \ldots & \La_i
	\end{bmatrix}
\]
\[
	\begin{split}
		H_i &= \begin{bmatrix}H_{i,1} & H_{i,2} & \ldots & H_{i, t_2}\end{bmatrix} \\
		H_{i,s} &= \begin{bmatrix}\A_1^{\dl+1} & \A_2^{\dl+1} & \ldots & \A_{n_2}^{\dl+1}\end{bmatrix} \\
	\end{split}
\],

where the following conditions are satisfied:
\bit
\item  q is a prime power such that there exists a subgroup $G$ of $\Fq{}^{*}$ of size atleast $n_2$ and with atleast $t_2$ cosets.
\item $\A_1, \A_2, \ldots \A_{n_2} \in G$ and $\A_i \neq \A_j$.
\item $\La_1, \La_2, \ldots, \La_{t_2} \in \Fq{}^{*}$ be elements from distinct cosets of $G$.
\eit
\end{constr}

We make use of the following determinantal identity to show that the matrix formed by the columns of the parity check matrix corresponding to the erased positions are invertible and hence can be recovered.

\blem[\cite{gopi2020maximally}] \label{lem:diag-id}
Let $C_1,\cdots,C_h$ be $a\times (a+1)$ dimensional matrices and $D_1,\cdots,D_h$ be $h \times (a+1)$ dimensional matrices over a field and let $D_i^{(j)}$ be the $j^{th}$ row of $D_i$. Then,
\begin{align*}
	 & \det \left[
		\begin{array}{c|c|c|c}
			C_1    & 0      & \cdots & 0      \\
			\hline
			0      & C_2    & \cdots & 0      \\
			\hline
			\vdots & \vdots & \ddots & \vdots \\
			\hline
			0      & 0      & \cdots & C_h    \\
			\hline
			D_1    & D_2    & \cdots & D_h    \\
		\end{array}
		\right] \\
		& = (-1)^{\frac{ah(h-1)}{2}}
	\det\left[
		\begin{matrix}
			\det\mattwoone{C_1}{D_1^{(1)}} & \cdots & \det\mattwoone{C_h}{D_h^{(1)}} \\
			\vdots                         & \ddots & \vdots                         \\
			\det\mattwoone{C_1}{D_1^{(h)}} & \cdots & \det\mattwoone{C_h}{D_h^{(h)}} \\
		\end{matrix}
		\right].
\end{align*}
\elem

\bthm

The code $C$ given by Construction \ref{constr:h11h21} is a $ [k, r_1, r_2, h_1=1, h_2=1, \delta] $ HL MRC.

\ethm

\bprf To show that the code is a $[k, r_1, r_2, 1, 1, \dl]$ HL MRC, we consider erasure patterns where there are $\delta$ erasures per local code, one erasure per mid-level code and one more erasure anywhere in the global code. We will show that any such erasure pattern is recoverable.

Since there is only one global erasure and it can be in one mid-level code, we consider that the mid-level code which has additional global erasure has index $l$ and for all $j \neq l$, there are no global erasures associated with these mid-level codes.

Correcting each mid-level code will, in the end, correct the original code.

We show how to correct each of these mid-level codes.
\ben
\item For all $j^{th}$ mid-level codes ($j \neq l$), the corresponding erasure pattern is shown. Let the  mid-level code where the erasure occurs be $j'$.

The submatrix $B_j$ of the parity-check matrix which is used to recover the erasures within the $j^{th}$ mid-level code is given by,
\[
	B_j = \begin{bmatrix}
		\A_{j'_1}       & \A_{j'_2}       & \ldots & \A_{j'_\dl}       & \A_{j'_{\dl+1}}       \\
		\A_{j'_1}^{2}   & \A_{j'_2}^{2}   & \ldots & \A_{j'_\dl}^{2}   & \A_{j'_{\dl+1}}^{2}   \\
		\vdots          & \vdots          & \ddots & \vdots            & \vdots                \\
		\A_{j'_1}^{\dl} & \A_{j'_2}^{\dl} & \ldots & \A_{j'_\dl}^{\dl} & \A_{j'_{\dl+1}}^{\dl} \\
		\La_{j'}        & \La_{j'}        & \ldots & \La_{j'}          & \La_{j'}              \\
	\end{bmatrix}
\]
where $\{j'_1, \ldots j'_{\dl+1}\}$ denote the $\dl+1$ erased coordinates in the local group $j'$.
We can clearly see that this matrix is a Vandermonde matrix after scaling and permuting rows. Hence $\det(B_j) \neq 0$.

\item For the $l^{th}$ mid-level code,
we will also involve the global parity. This case can again be divided into two sub cases depending on the local group where the extra erasure happens:
\ben
\item Both the mid-level erasure and the global erasure occur in the same local code, $l'$.

The matrix formed will be,
\[
	B_l = \begin{bmatrix}
		\A_{l'_1}         & \A_{l'_2}         & \ldots & \A_{l'_\dl}         & \A_{l'_{\dl+1}}         & \A_{l'_{\dl+2}}         \\
		\A_{l'_1}^{2}     & \A_{l'_2}^{2}     & \ldots & \A_{l'_\dl}^{2}     & \A_{l'_{\dl+1}}^{2}     & \A_{l'_{\dl+2}}^{2}     \\
		\vdots            & \vdots            & \ddots & \vdots              & \vdots                  & \vdots                  \\
		\A_{l'_1}^{\dl}   & \A_{l'_2}^{\dl}   & \ldots & \A_{l'_\dl}^{\dl}   & \A_{l'_{\dl+1}}^{\dl}   & \A_{l'_{\dl+2}}^{\dl}   \\
		\La_{l'}          & \La_{l'}          & \ldots & \La_{l'}            & \La_{l'}                & \La_{l'}                \\
		\A_{l'_1}^{\dl+1} & \A_{l'_2}^{\dl+1} & \ldots & \A_{l'_\dl}^{\dl+1} & \A_{l'_{\dl+1}}^{\dl+1} & \A_{l'_{\dl+2}}^{\dl+1} \\
	\end{bmatrix}.
\]
This is similar to above where $B_l$ after scaling and permuting the rows is also a Vandermonde matrix. Hence $\det(B_l) \neq 0$.

\item The mid-level and global erasure occur in different local codes. Let those local codes be $l'$ and $l''$. The matrix $B_l$,
\begin{align*}
	& \det(B_l)& = \\ 
	& \det \begin{bmatrix}
		\A_{l'_1}               & \ldots & \A_{l'_{\dl+1}}                                                                        \\
		\A_{l'_1}^{2}           & \ldots & \A_{l'_{\dl+1}}^{2}                                                                    \\
		\vdots                  & \ddots & \vdots                                                                                 \\
		\A_{l'_1}^{\dl}         & \ldots & \A_{l'_{\dl+1}}^{\dl}                                                                  \\
		                        &        &                         & \A_{l''_1}               & \ldots & \A_{l''_{\dl+1}}         \\
		                        &        &                         & \A_{l''_1}^{2}           & \ldots & \A_{l''_{\dl+1}}^{2}     \\
		                        &        &                         & \vdots                   & \ddots & \vdots                   \\
		                        &        &                         & \A_{l''_1}^{\dl}         & \ldots & \A_{l''_{\dl+1}}^{\dl}   \\
		\La_{l'}                & \ldots & \La_{l'}                & \La_{l''}                & \ldots & \La_{l''}                \\
		\A_{l'_{\dl+1}}^{\dl+1} & \ldots & \A_{l_{\dl+1}'}^{\dl+1} & \A_{l_{\dl+1}''}^{\dl+1} & \ldots & \A_{l_{\dl+1}''}^{\dl+1} \\
	\end{bmatrix}
\end{align*}
\begin{align*}
	  & \hspace{-0.6in}= \det \begin{bmatrix}
		\det \begin{pmatrix}
			\A_{l'_1}       & \ldots & \A_{l'_{\dl+1}}       \\
			\A_{l'_1}^{2}   & \ldots & \A_{l'_{\dl+1}}^{2}   \\
			\vdots          & \ddots & \vdots                \\
			\A_{l'_1}^{\dl} & \ldots & \A_{l'_{\dl+1}}^{\dl} \\
			\La_{l'}        & \ldots & \La_{l'}              \\
		\end{pmatrix} &
		\det \begin{pmatrix}
			\A_{l''_1}       & \ldots & \A_{l''_{\dl+1}}       \\
			\A_{l''_1}^{2}   & \ldots & \A_{l''_{\dl+1}}^{2}   \\
			\vdots           & \ddots & \vdots                 \\
			\A_{l''_1}^{\dl} & \ldots & \A_{l''_{\dl+1}}^{\dl} \\
			\La_{l''}        & \ldots & \La_{l''}              \\
		\end{pmatrix}   \\
		\det \begin{pmatrix}
			\A_{l'_1}               & \ldots & \A_{l'_{\dl+1}}         \\
			\A_{l'_1}^{2}           & \ldots & \A_{l'_{\dl+1}}^{2}     \\
			\vdots                  & \ddots & \vdots                  \\
			\A_{l'_1}^{\dl}         & \ldots & \A_{l'_{\dl+1}}^{\dl}   \\
			\A_{l'_{\dl+1}}^{\dl+1} & \ldots & \A_{l'_{\dl+1}}^{\dl+1} \\
		\end{pmatrix} &
		\det \begin{pmatrix}
			\A_{l''_1}               & \ldots & \A_{l''_{\dl+1}}         \\
			\A_{l''_1}^{2}           & \ldots & \A_{l''_{\dl+1}}^{2}     \\
			\vdots                   & \ddots & \vdots                   \\
			\A_{l''_1}^{\dl}         & \ldots & \A_{l''_{\dl+1}}^{\dl}   \\
			\A_{l_{\dl+1}''}^{\dl+1} & \ldots & \A_{l_{\dl+1}''}^{\dl+1} \\
		\end{pmatrix}   \\
	\end{bmatrix} \\
\end{align*}

The above equation implies that $\det(B_l) = 0$ if and only if
\[
	\det \begin{bmatrix}
		\La_{l'}                      & \La_{l''}                      \\
		\prod_{i=1}^{\dl+1} \A_{l'_i} & \prod_{i=1}^{\dl+1} \A_{l_i''} \\
	\end{bmatrix} = 0.
\]
Where we factored out the non-zero Vandermonde determinants from each column. Since $\A_{l'_i}, \A_{l''_i} \in G$ and $\La_{l'}, \La_{l''}$ are in different cosets of $G$, the last determinant cannot be zero.
\een

\een
Hence, we proved that the code can recover from all possible erasure patterns specified by the definition of HL MRC and hence it is a HL MRC with the corresponding parameters.
\eprf

\subsection{Construction for $h_1=2$ and $h_2=1$}\label{sec:2_1}
Now, we will provide the construction of a HL-MRC with $2$ global parities ($h_1=2$) and $1$ mid-level parity per mid-level code ($h_2 = 1$).

\begin{constr} \label{constr:h12h21}
We give a construction of code $\mathcal{C}$, which is specified by the following parity-check matrix $H$:
\[
	H = \begin{bmatrix}
		H_0 &                        \\
		    & H_0 &                  \\
		    &     & \ddots &         \\
		    &     &        & H_0     \\
		H_1 & H_2 & \hdots & H_{t_1} \\
	\end{bmatrix}
	\ \ \ 
	H_0 = \begin{bmatrix}
		M_0 &                        \\
		    & M_0 &                  \\
		    &     & \ddots &         \\
		    &     &        & M_0     \\
		M_1 & M_2 & \hdots & M_{t_2} \\
	\end{bmatrix}
\]
\[
	M_0 = \begin{bmatrix}
		\frac{1}{\A_1-\B_1}     & \frac{1}{\A_2-\B_1}     & \ldots & \frac{1}{\A_{n_2}-\B_1}     \\
		\frac{1}{\A_1-\B_2}     & \frac{1}{\A_2-\B_2}     & \ldots & \frac{1}{\A_{n_2}-\B_2}     \\
		\vdots                  & \vdots                  & \ddots & \vdots                      \\
		\frac{1}{\A_1-\B_{\dl}} & \frac{1}{\A_2-\B_{\dl}} & \ldots & \frac{1}{\A_{n_2}-\B_{\dl}} \\
	\end{bmatrix}
\]
\[	M_i = \begin{bmatrix}
		\frac{1}{\A_1-\B_{\dl+1}} & \frac{1}{\A_2-\B_{\dl+1}} & \ldots & \frac{1}{\A_{n_2}-\B_{\dl+1}} \\
	\end{bmatrix}
\]
\[
	\begin{split}
		H_i &= \begin{bmatrix}H_{i,1} & H_{i,2} & \ldots & H_{i, t_2}\end{bmatrix} \\
		H_{i,s} &=
		\begin{bmatrix}
			\frac{\La_{s+(i-1)t_2}}{\A_1-\B_{\dl+2}} & \frac{\La_{s+(i-1)t_2}}{\A_2-\B_{\dl+2}} & \ldots & \frac{\La_{s+(i-1)t_2}}{\A_{n_2}-\B_{\dl+2}} \\
			\frac{\mu_{s+(i-1)t_2}}{\A_1-\B_{\dl+3}} & \frac{\mu_{s+(i-1)t_2}}{\A_2-\B_{\dl+3}} & \ldots & \frac{\mu_{s+(i-1)t_2}}{\A_{n_2}-\B_{\dl+3}} \\
		\end{bmatrix} \\
	\end{split}
\]

The parameters described in the above parity-check matrix are picked as follows:
\bit
\item $q_0 \geq 2(n_2+\dl)+3$ is a prime power.
\item There exists a subgroup $G$ of $\Fqo^{*}$ of size at least $n_2+2$ with atleast $t_1t_2$ cosets.
\item $\Fq{}$ is an extension field of $\Fqo$.
\item $\mu_1, \ldots, \mu_{t_1t_2}$ are picked from distinct cosets of $G$.
\item Choose distinct $\B_{\dl+1}, \B_{\dl+2}, \B_{\dl+3} \in \Fqo$.
\item Pick $\A_1, \ldots \A_{n_2} \in \Fqo$ such that, $\frac{\A_i-\B_{\dl+2}}{\A_i - \B_{\dl+3}}, \frac{\A_i - \B_{\dl+1}}{\A_i - \B_{\dl+3}} \in G$.
\item Pick distinct $\B_1, \ldots, \B_{\dl} \in \Fqo \setminus \{\A_1, \ldots, \A_{n_2}, \B_{\dl+1}, \B_{\dl+2}, \B_{\dl+3}\}$.
\item $\La_1, \La_2, \ldots, \La_{t_1t_2} \in \Fq{}$ are picked 4 wise-independent over $\Fqo$.
\eit
\end{constr}

\begin{thm} \label{thm:h12h21}
The code $C$ given by Construction \ref{constr:h12h21} is a $ [k, r_1, r_2, h_1=1, h_2=1, \delta] $ HL-MRC.
\end{thm}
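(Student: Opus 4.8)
The plan is to mimic the proof structure of Theorem~\ref{thm:necsuf} and Construction~\ref{constr:h11h21}, reducing the problem of correcting an arbitrary admissible erasure pattern to showing that certain structured matrices are nonsingular. For a $[k, r_1, r_2, 2, 1, \dl]$ HL-MRC, an admissible erasure pattern consists of $\dl$ erasures per local group $B_{i,s}$, plus one extra erasure per mid-level code $A_i$, plus two further global erasures placed arbitrarily. Since $h_1 = 2$, these two extra global erasures can fall in at most two distinct mid-level codes; by Lemma~\ref{lem:midHLMRC} each mid-level code is a $[r_1, r_2, 1, \dl]$ local MRC, so the mid-level codes that receive no global erasure are handled exactly as in the $j \neq l$ case of Construction~\ref{constr:h11h21}. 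The work is therefore to show recoverability inside the (at most two) mid-level codes carrying the global erasures.

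First I would set up the Cauchy-matrix calculus: with $M_0$ the $\dl \times n_2$ Cauchy matrix on nodes $\{\A_j\}$ and poles $\{\B_1,\dots,\B_\dl\}$, and the mid-level and global rows being Cauchy rows with poles $\B_{\dl+1}, \B_{\dl+2}, \B_{\dl+3}$ scaled by $\La$ and $\mu$, I would record that any square submatrix formed by selecting $\dl+1$ columns together with the $M_i$ row is (up to a common scaling of columns) a Cauchy matrix, hence nonsingular — this is the analogue of the Vandermonde argument in Construction~\ref{constr:h11h21}. Next I would enumerate the cases for where the two global erasures land: (a) both in the same mid-level code $A_l$, and within $A_l$ either in the same local group or in two different local groups; (b) one global erasure in $A_l$ and one in $A_{l'}$ with $l \neq l'$. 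In each case the relevant submatrix $B$ has a block-diagonal-plus-border shape to which Lemma~\ref{lem:diag-id} applies: the diagonal blocks $C_s$ are $\dl \times (\dl+1)$ Cauchy blocks and the border rows $D_s$ are the mid-level row and the two global rows restricted to the erased coordinates of group $s$. After applying Lemma~\ref{lem:diag-id}, $\det B$ reduces to the determinant of a small matrix whose entries are $2\times 2$ or $3\times 3$ determinants $\det\mattwoone{C_s}{D_s^{(j)}}$, each of which is itself a scaled Cauchy/Vandermonde determinant; I would evaluate these explicitly, factor out the nonzero Cauchy determinants, and show the residual $h_1 \times h_1 = 2\times 2$ (or $3\times 3$ when including the mid-level row) determinant is nonzero using: (i) that $\frac{\A_i - \B_{\dl+1}}{\A_i-\B_{\dl+3}}$ and $\frac{\A_i-\B_{\dl+2}}{\A_i-\B_{\dl+3}}$ lie in the subgroup $G$, (ii) that the $\mu_s$ come from distinct cosets of $G$, and (iii) that the $\La_s$ are $4$-wise independent over $\Fqo$.

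The main obstacle I anticipate is the case (b) with the global erasures in different mid-level codes, because there the $3\times 3$ residual determinant mixes $\La$ values and $\mu$-coset data from two different groups, and one must check that no accidental cancellation occurs: the coset condition on the $\mu_s$ kills the "product of nodes" row (as in Construction~\ref{constr:h11h21}'s last step), while the $4$-wise independence of the $\La_s$ must rule out the remaining degeneracy involving the $\La$ row. Making precise exactly which four $\La$'s appear (at most two per mid-level code, and possibly fewer after the diagonal reduction) and confirming the count never exceeds four is the delicate bookkeeping step. A secondary subtlety is that the theorem statement reads $h_1 = 1$ in the displayed parameters, which is evidently a typo for $h_1 = 2$; I would state the result for $[k, r_1, r_2, 2, 1, \dl]$ consistently with Construction~\ref{constr:h12h21} and with the earlier discussion in Section~\ref{sec:2_1}. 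Once all cases yield a nonzero determinant, the erased coordinates in every mid-level code are recoverable, hence the whole code is, and by Definition~\ref{defn:HLMRC} the code is an HL-MRC.
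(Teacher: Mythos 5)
Your proposal follows essentially the same route as the paper's proof: reduce to the mid-level codes that absorb the two global erasures, enumerate the placements of those erasures relative to the local groups and the mid-level erasures, expand each block-diagonal-plus-border submatrix via the determinantal identities, factor out nonzero Cauchy determinants, and kill the residual $2\times 2$ or $3\times 3$ determinant using the subgroup membership of the ratios $\frac{\A_i-\B_{\dl+1}}{\A_i-\B_{\dl+3}}$, $\frac{\A_i-\B_{\dl+2}}{\A_i-\B_{\dl+3}}$, the distinct cosets of the $\mu$'s, and the $4$-wise independence of the $\La$'s (and you correctly spot the $h_1=1$ typo in the theorem statement). The only caveat is that in the mixed cases the diagonal blocks are not all $\dl\times(\dl+1)$ (a local group carrying the mid-level erasure and one or both global erasures contributes $\dl+2$ or $\dl+3$ erased columns), so Lemma \ref{lem:diag-id} as stated does not apply there and the paper instead invokes its unequal-block variants, Lemmas \ref{lem:2prod-id} and \ref{lem:3prod-id}; this is exactly why the paper's enumeration refines your cases (a) and (b) into six sub-cases according to whether each global erasure shares its local group with a mid-level erasure.
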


\bprf

Again as in previous proof, we consider the case when there are $\delta$ erasures per mid-level code, one erasure per mid-level code and two more global erasures anywhere in the code. We again look at the erasure patterns within each mid-level codes. The following distinct patterns are possible with respect to the mid-level codes.
\ben
\item No global erasures occur in that mid-level code.
\item Either one or both of the global erasures occur in the mid-level code.
\een
We prove that all the above erasure patterns are recoverable.

\ben

\item When no global erasures occur in the mid-level code, there are $\dl$ erasures per local code and one more erasure per mid-level code.

In this scenario, we involve the mid-level parities. Let $l$ be the affected mid-level code and $l'$ be the local code within the mid-level code where the erasure occurs. 
Let $\g_{i,j} = \frac{1}{\A_j-\B_i}$.
The matrix, $B_l$
\[
	B_l = \begin{bmatrix}
		\g_{1,l'_1}     & \g_{1,l'_2}     & \ldots & \g_{1,l'_{\dl+1}}      \\
		\g_{2,l'_1}     & \g_{2,l'_2}     & \ldots & \g_{2,l'_{\dl+1}}      \\
		\vdots          & \vdots          & \ddots & \vdots                 \\
		\g_{\dl,l'_1}   & \g_{\dl,l'_2}   & \ldots & \g_{\dl,l'_{\dl+1}}    \\
		\g_{\dl+1,l'_1} & \g_{\dl+1,l'_2} & \ldots & \g_{\dl+1, l'_{\dl+1}} \\
	\end{bmatrix}.
\]
Where $\{l'_1, l'_2, \ldots, l'_{\dl+1} \}$ are the erased coordinates in local code $l'$. This is a Cauchy matrix and hence $\det(B_l) \neq 0$.

\item When there are global erasures, there are $\dl$ erasures per local code, one erasure per mid-level code and two more erasures anywhere in the code

We will only list the cases (6 in total) of erasure patterns here. We refer the reader to Appendix A for details of the proof, where we derive that in each of the following cases, the parity-check matrix restricted to the  erased columns is full rank.
\ben
\item Both global erasures are in the same local code as the mid-level code.
\item Both global erasures are in the same local code but different one from the mid-level erasure for that mid-level code.
\item Both global and the one mid-level erasures are in different local code but the same mid-level code.
\item Both global erasures are in different mid-level code but share that local code with the mid-level parities for that mid-level code.
\item Each global erasure is in their own different local code and do not share with the mid-level erasures.
\item In this case, one of the global erasure shares the local code with a mid-level code while the other does not.
\een
\een

\eprf

\section{A Field Size Lower Bound for HL MRC}

In this section, we will derive lower bounds on the field size of HL MRC. The proof technique is similar to the one developed in \cite{gopi2020maximally}, with the difference being that in this case, there are mid-level codes as well and hence while performing shortening in the parity check matrix, this has to be taken into account. The following lemma derived in \cite{gopi2020maximally} will be useful in deriving the lower bounds on field size.

\blem[\cite{gopi2020maximally}] \label{lem:projective}
Let $X_1, X_2, \ldots, X_g \in \mathbb{P}^d(\mathbb{F}_q)$ be mutually disjoint subsets each of size $t$ with $g \geq d+1$ of the projective space $\mathbb{P}^d(\mathbb{F}_q)$ . If $q < \left ( \frac{g}{d}-1 \right )t - 4$. Then, there exists a hyperplane which intersects $d+1$ distinct subsets among $X_1, X_2, \ldots, X_g$.
\elem

\bthm \label{thm:acase}

Consider a $[k,r_1,r_2,h_1,h_2,\delta]$ HL MRC. If $(\delta+2) \leq h_1 + h_2$, $h_1 \leq \frac{n}{n_1}$ and $h_2 \leq \frac{n_1}{n_2}-1$, then the field size $q$ is lower bounded as follows:
\begin{equation}
q \geq \left ( \frac{\frac{n}{n_2}}{h_1h_2+h_1-1} - 1 \right ) {r+\delta \choose \delta+1} - 4.
\end{equation}
\ethm

\begin{proof}
Consider an arbitrary $[k,r_1,r_2,h_1,h_2,\delta]$ HL MRC with $t_1 = \frac{n}{n_1}$ mid-level codes and $t_2 = \frac{n_1}{n_2}$ local codes per mid-level code. The code has parity-check matrix of the form
\[
	H = \begin{bmatrix}
		M_1 &                        \\
		    & M_2 &                  \\
		    &     & \ddots &         \\
		    &     &        & M_{t_1}     \\
		P_1 & P_2 & \hdots & P_{t_1} \\
	\end{bmatrix}, \ \ M_i = \begin{bmatrix}
		M_{i,1} &                        \\
		    & M_{i,2} &                  \\
		    &     & \ddots &         \\
		    &     &        & M_{i,t_2}     \\
		N_{i,1} & N_{i,2} & \hdots & N_{i,t_2} \\
	\end{bmatrix},
\]

where $M_1, \ldots, M_{t_1}$ are $\delta' \times (r_1 + \delta')$ matrices over $\mathbb{F}_q$, where $\delta' = h_2 + \left ( \frac{r_1+h_2}{r_2} \right ) \delta$. $P_1, \ldots, P_{t_1}$ are $h_1 \times r_1 + \delta'$ matrices over $\mathbb{F}_q$. $M_{i,1}, \ldots, M_{i, t_2}$ are $\delta \times (r_2 + \delta)$ matrices over $\mathbb{F}_q$. $N_{i,1}, \ldots, N_{i, t_2}$ are $h_2 \times r_2 + \delta$ matrices over $\mathbb{F}_q$. For every subset $S \subseteq [r+\delta]$ of size $|S| = \delta +1$, $M_{i,j}(S)$ is an $\delta \times (\delta +1)$ matrix is full rank. Let $M_{i,j}(S)^\perp \in \mathbb{F}_q^{\delta+1}$ be a nonzero vector orthogonal to the row space of $M_{i,j}(S)$. We know that $M_{i,j}(S) M_{i,j}(S)^\perp = 0$, $q^2_{i,j}(S) = N_{i,j}(S) M_{i,j}(S)^\perp$ where $q^2_{i,j}(S)$ is a $h_2 \times 1$ matrix and $q^1_{i,j}(S) = P_{i,j}(S) M_{i,j}(S)^\perp$ where $q^1_{i,j}(S)$ is a $h_1 \times 1$ matrix. $M_i(S_i)^\perp$ is defined as follows:
\[	M_i(S_i)^\perp = \begin{bmatrix}
		M_{i,1}(S_{i,1})^\perp &                        \\
		    & M_{i,2}(S_{i,2})^\perp &                  \\
		    &     & \ddots &         \\
		    &     &        & M_{i,t_2}(S_{i,t_2})^\perp     \\
	\end{bmatrix}.
\]
Let
\[
D = \begin{bmatrix}
		M_1(S_1) &                        \\
		    & M_2(S_2) &                  \\
		    &     & \ddots &         \\
		    &     &        & M_{t_1}(S_{t_1})     \\
		P_1(S_1) & P_2(S_2) & \hdots & P_{t_1}(S_{t_1}) \\
	\end{bmatrix}.
\]
We form the matrix 
\[ Q = D \ diag(M_1(S_1)^\perp, M_2(S_2)^\perp, \ldots, M_{t_1}(S_{t_1})^\perp).
\]
After removing the zero rows in the $Q$ matrix, the structure of the resulting matrix is as follows:
\[
Q' = \begin{bmatrix}
		q^2_{l_{1,1}, S_{1,1}} & \ldots & q^2_{l_{1,h_3}, S_{1,h_3}} & & & \\
		    &     & \ddots &         \\
		    &     &        &      \\
		& &  & \ldots & q^2_{l_{h_1,1}, S_{h_1,1}} & \ldots & q^2_{l_{h_1,h_3}, S_{h_1,h_3}} \\
		q^1_{l_{1,1}, S_{1,1}} & \ldots & q^1_{l_{1,h_3}, S_{1,h_3}} & \ldots & q^1_{l_{h_1,1}, S_{h_1,1}} & \ldots & q^1_{l_{h_1,h_3}, S_{h_1,h_3}}
	\end{bmatrix}.
\]
We denote the first block of columns of the above matrix by $q_{l_{1,1}, S_{1,1}}$, 
second by $q_{l_{1,2}, S_{1,2}}$ and the last one by $q_{l_{h_1,h_2+1}, S_{h_1,h_2+1}}$. With this notation, the following lemma gives the rank property of the above matrix. 

\begin{lem} \label{lem:lem1}
For any $I \subset [t_1]$, with $|I| = h_1$ and for every $i \in I$, $l_{i,1}, l_{i,2}, \ldots, l_{i, (h_2+1)}$ and subsets $S_{i,1}, S_{i,2}, \ldots, S_{i, h_2+1} \subseteq [r+\delta]$ of size $\delta + 1 $ each, Then $(h_1h_2 + h_1 ) \times (h_1 h_2 + h_1)$ matrix $Q'$ is full rank.
\end{lem}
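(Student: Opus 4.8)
The plan is to derive the full-rank property of $Q'$ from the maximal recoverability of $\mathcal{C}$, by matching $Q'$ with a row-reduction of the parity-check columns of one concrete erasure pattern.

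I would first attach to the data $(I,\{l_{i,m}\},\{S_{i,m}\})$ of the lemma an erasure set $\mathcal{E}\subseteq[n]$. Inside each mid-level code $A_i$ with $i\in I$: for every $m\in\{1,\dots,h_2+1\}$ erase the $\delta+1$ coordinates of $B_{i,l_{i,m}}$ indexed by $S_{i,m}$, and in each of the remaining $t_2-(h_2+1)$ local groups of $A_i$ erase some $\delta$ coordinates (possible since $h_2+1\le t_2$). Inside each $A_i$ with $i\notin I$: erase $\delta$ coordinates in every local group, together with one further coordinate in each of $h_2$ of these groups (possible since $h_2\le t_2-1$). A direct count gives $|\mathcal{E}|=t_1t_2\delta+t_1h_2+h_1=n-k$, each $B_{i,s}$ carries at most $\delta+1\le n_2$ erased coordinates, and each $A_i$ carries exactly $h_2+t_2\delta$ for $i\notin I$ and $h_2+t_2\delta+1$ for $i\in I$. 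Hence $\mathcal{E}=\bar{E}\cup F$, where $\bar{E}$ meets every $A_i$ in exactly $h_2+t_2\delta$ coordinates and at least $\delta$ in each $B_{i,s}$, and $F$ consists of $h_1$ coordinates, one inside each $A_i$ with $i\in I$. Therefore $E=[n]\setminus\bar{E}$ satisfies the conditions of Definition~\ref{defn:HLMRC}, so $\mathcal{C}|_{E}$ is a $[k+h_1,k,h_1+1]$ MDS code and in particular corrects the $h_1$ erasures in $F$; equivalently, the columns of $H$ indexed by $\mathcal{E}$ are linearly independent, i.e.\ $H|_{\mathcal{E}}$ is an invertible $(n-k)\times(n-k)$ matrix.

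The remaining task is to reduce $H|_{\mathcal{E}}$ by row operations to $Q'$. I would use two facts that hold because $\mathcal{C}$ is an HL-MRC: (i) every local code is $[n_2,r_2,\delta+1]$ MDS (else $\mathcal{C}$ would carry a nonzero word of weight at most $\delta<d_{\min}(\mathcal{C})$), so any $\delta$ columns of $M_{i,s}$ are independent; and (ii) by Lemma~\ref{lem:midHLMRC} each $\mathcal{C}_{A_i}$ is a $[r_1,r_2,h_2,\delta]$ local MRC with parity-check matrix $M_i$, so $M_i$ restricted to $\mathcal{E}\cap A_i$ is invertible when $\mathcal{E}\cap A_i$ is a recoverable pattern for that local MRC, which it is for $i\notin I$. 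First, for each $i\notin I$, use the invertible block $M_i|_{\mathcal{E}\cap A_i}$ to clear the global rows $P_i$ beneath $A_i$; since $M_i$ is supported only on the columns of $A_i$, nothing outside $A_i$ is changed and the whole column-block of $A_i$ may be dropped. Next, for each $i\in I$ and each local group $B_{i,s}$ of $A_i$ carrying exactly $\delta$ erasures, use the invertible $\delta\times\delta$ sub-block of $M_{i,s}$ to clear the $N_{i,s}$ and $P_{i,s}$ rows beneath it; again only the columns of $B_{i,s}$ are touched, so these columns may be dropped. Finally, for each group $B_{i,l_{i,m}}$ with $i\in I$ the sub-block $M_{i,l_{i,m}}(S_{i,m})$ has rank $\delta$ with one-dimensional kernel spanned by $M_{i,l_{i,m}}(S_{i,m})^{\perp}$; clearing beneath it leaves one residual column on which the mid-level rows of $A_i$ read $q^2_{l_{i,m},S_{i,m}}=N_{i,l_{i,m}}(S_{i,m})M_{i,l_{i,m}}(S_{i,m})^{\perp}$ and the global rows read $q^1_{l_{i,m},S_{i,m}}=P_{i,l_{i,m}}(S_{i,m})M_{i,l_{i,m}}(S_{i,m})^{\perp}$. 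After dropping all pivot rows and columns, what remains is exactly $Q'$: the $h_1$ diagonal blocks of size $h_2\times(h_2+1)$ formed by the $q^2$-vectors, one per $i\in I$, stacked over the $h_1\times h_1(h_2+1)$ array of $q^1$-columns.

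Because all of the above are row operations performed with invertible pivot blocks that are block-triangular after the earlier steps, $Q'$ is the Schur complement of the invertible matrix $H|_{\mathcal{E}}$ with respect to an invertible leading block, hence invertible; since the rank of $Q'$ does not depend on the scalar normalisation of the kernel vectors $M_{i,l_{i,m}}(S_{i,m})^{\perp}$, the conclusion holds for every admissible choice, which is the lemma. I expect the main difficulty to be bookkeeping: performing the eliminations in the right order (the $i\notin I$ blocks and the $\delta$-erasure local groups first, the $(\delta+1)$-erasure local groups last) and checking at each stage that the pivot block is invertible and supported only on columns that have already been processed, so that the leftover Schur complement is literally $Q'$. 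A secondary point needing care is establishing (i) and (ii) for an arbitrary HL-MRC rather than only for the explicit constructions, which rests on $d_{\min}(\mathcal{C})>\delta$ and on Lemma~\ref{lem:midHLMRC}.
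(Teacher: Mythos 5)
Your proposal is correct and follows essentially the paper's approach: the paper's entire proof of this lemma is the single line ``follows from the HL MRC property,'' and your construction of an admissible erasure pattern from $(I,\{l_{i,m}\},\{S_{i,m}\})$, followed by block elimination with invertible pivots so that $Q'$ emerges (up to nonzero column scalings) as a Schur complement of the invertible matrix $H|_{\mathcal{E}}$, is precisely the argument that line abbreviates. The only point to repair is your justification of fact (i): a weight-$\delta$ word of the restriction $\mathcal{C}|_{B_{i,s}}$ need not lift to a weight-$\delta$ codeword of $\mathcal{C}$, so the independence of any $\delta$ columns of $M_{i,s}$ should instead be deduced from the MR property itself (every $r_2$-subset of a local group extends to an information set of a punctured MDS code from Definition \ref{defn:HLMRC}, forcing $\mathcal{C}|_{B_{i,s}}$ to be MDS with $M_{i,s}$ as a parity-check matrix), a regularity the paper assumes implicitly in the setup of Theorem \ref{thm:acase} and uses again in Lemma \ref{lem:lem2}.
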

\bprf
Follows from the HL MRC property of the code under consideration.
\eprf

\begin{lem} \label{lem:lem2}
For every $i,j \in [t_1,t_2]$, no two vectors in $\{ q_{i,j}(S) : S \subseteq {[r+\delta] \choose \delta+1} \}$ are multiples of each other.
\end{lem}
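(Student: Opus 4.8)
The plan is to argue by contradiction, exploiting the maximal recoverability property via Lemma \ref{lem:lem1}. Suppose that for some fixed $i,j$ there exist two distinct subsets $S, S' \in {[r+\delta] \choose \delta+1}$ and a nonzero scalar $c \in \mathbb{F}_q$ with $q_{i,j}(S) = c \cdot q_{i,j}(S')$, where $q_{i,j}(S) = \big(q^2_{i,j}(S)^{\!\top}, q^1_{i,j}(S)^{\!\top}\big)^{\!\top}$ is the $(h_2+1)$-length vector formed by stacking the mid-level syndrome $q^2_{i,j}(S) = N_{i,j}(S)M_{i,j}(S)^\perp$ on top of the global syndrome $q^1_{i,j}(S) = P_{i,j}(S)M_{i,j}(S)^\perp$. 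First I would recall what these vectors encode: $M_{i,j}(S)^\perp$ spans the kernel of the $\delta\times(\delta+1)$ full-rank block $M_{i,j}(S)$, so it records exactly how the $\delta+1$ erased coordinates of local group $(i,j)$ indexed by $S$ can be written in terms of each other after cancelling the $\delta$ local parities; the entries of $q_{i,j}(S)$ then describe the residual dependence this erasure pattern imposes on the $h_2$ mid-level parities and the $h_1$ global parities.

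Next I would set up a single erasure pattern that selects $S$ in local group $(i,j)$ and $S'$ in local group $(i,j)$ — but since both sit in the \emph{same} local group, one cannot realize both simultaneously; instead the trick (following \cite{gopi2020maximally}) is to use $h_1$ \emph{different} mid-level codes. Concretely, pick $I \subseteq [t_1]$ with $|I| = h_1$ and $i \in I$; in mid-level code $i$ choose the $(h_2+1)$ local groups so that $S_{i,1} = S$ and $S_{i,2} = S'$ while the remaining $S_{i,m}$ and all choices in the other $h_1 - 1$ mid-level codes are generic. By Lemma \ref{lem:lem1} the resulting $(h_1h_2+h_1)\times(h_1h_2+h_1)$ matrix $Q'$ must be full rank for \emph{every} such choice. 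But the block of $Q'$ coming from mid-level code $i$ has the $(h_2+1)$ columns $q_{i,\cdot}(S_{i,1}), \dots, q_{i,\cdot}(S_{i,h_2+1})$ appearing within a single $(h_2+1)$-row band (the $q^2$ rows for index $i$) together with one extra row in the $q^1$ band; if $q_{i,j}(S) = c\, q_{i,j}(S')$ then the first two of these columns are scalar multiples of each other, so that $(h_2+1)\times(h_2+1)$ sub-band has rank $\le h_2$, forcing a nonzero vector in the left kernel of $Q'$ and hence $\det Q' = 0$. This contradicts Lemma \ref{lem:lem1}, so no two of the $q_{i,j}(S)$ can be proportional.

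The main obstacle I anticipate is the bookkeeping needed to justify that the proportionality $q_{i,j}(S) = c\,q_{i,j}(S')$ genuinely collapses the rank of $Q'$: one has to check that the two columns live in a common block of rows (they do, since they share the mid-level index $i$ and the $q^1$-rows are shared by all blocks) and that the generic choices of the other $S_{i',m}$ can indeed be made so that Lemma \ref{lem:lem1} applies — i.e.\ there are at least $h_1$ mid-level codes and at least $h_2+1$ local groups per mid-level code to draw from, which is exactly guaranteed by the standing hypotheses $h_1 \le n/n_1$ and $h_2 \le n_1/n_2 - 1$. A secondary subtlety is that $M_{i,j}(S)^\perp$ is only defined up to a nonzero scalar, but this is harmless: rescaling $M_{i,j}(S)^\perp$ rescales $q_{i,j}(S)$, so the statement ``no two are multiples of each other'' is well posed, and a proportionality between $q_{i,j}(S)$ and $q_{i,j}(S')$ is invariant under such rescalings. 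Once these points are pinned down the contradiction is immediate, and I would keep the written proof short, citing Lemma \ref{lem:lem1} for the rank fact and \cite{gopi2020maximally} for the analogous single-level argument.
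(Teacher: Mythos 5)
Your reduction to Lemma \ref{lem:lem1} does not go through, and the obstacle is exactly the one you flag and then try to sidestep: the two vectors $q_{i,j}(S)$ and $q_{i,j}(S')$ are built from the \emph{same} local block $\bigl[M_{i,j};N_{i,j};P_{i,j}\bigr]$, so to make both of them appear as columns of $Q'$ you would have to take $l_{i,1}=l_{i,2}=j$, i.e.\ use the same local code twice with two different $(\delta+1)$-subsets. Lemma \ref{lem:lem1} does not cover that configuration: its full-rank claim is exactly the recoverability of a concrete erasure pattern ($\delta+1$ erasures in each of $h_2+1$ \emph{distinct} local codes inside each of $h_1$ middle codes), and there is no erasure pattern in which two different $(\delta+1)$-subsets of one and the same local group are erased ``separately.'' Choosing $S_{i,1}=S$ and $S_{i,2}=S'$ attached to two \emph{different} local groups, as your workaround suggests, produces the vectors $q_{i,l_{i,1}}(S)$ and $q_{i,l_{i,2}}(S')$, which are not the vectors the lemma is about, so no column proportionality (and hence no singularity of $Q'$) follows from your hypothesis. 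Indeed, if Lemma \ref{lem:lem1} were valid with repeated local codes, Lemma \ref{lem:lem2} would be an immediate corollary and would not need a separate proof. A further warning sign is that your argument never uses the standing hypothesis $\delta+2\le h_1+h_2$ of Theorem \ref{thm:acase}, which is essential to the statement's proof.

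The paper's proof is instead a direct, local argument: if $q_{i,j}(S)=\lambda\, q_{i,j}(T)$, then
$\bigl[M_{i,j};N_{i,j};P_{i,j}\bigr](S)\,M_{i,j}(S)^\perp-\lambda\,\bigl[M_{i,j};N_{i,j};P_{i,j}\bigr](T)\,M_{i,j}(T)^\perp=0$
is a nontrivial linear dependence among the columns indexed by $S\cup T$ (nontrivial because every coordinate of $M_{i,j}(S)^\perp$ is nonzero, else $\delta$ columns of $M_{i,j}(S)$ would be dependent), and $|S\cup T|\le 2\delta+2\le \delta+h_1+h_2$ by the hypothesis; but the MR property forces any $\delta+h_1+h_2$ columns of the stacked local matrix to be full rank, a contradiction. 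If you want to salvage your write-up, replace the appeal to Lemma \ref{lem:lem1} with this dependence-on-$S\cup T$ argument; your observation that the statement is invariant under rescaling $M_{i,j}(S)^\perp$ is correct and worth keeping.
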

\begin{proof}
Suppose $q_{i,j}(S) = \lambda q_{i,j}(T)$ for some distinct $S, T \subset [r+\delta]$ of size $\delta +1$ each and some nonzero $\lambda \in \mathbb{F}_q$.
\begin{eqnarray*}
& & \begin{bmatrix} M_{i,j}(S) \\ N_{i,j}(S) \\ P_{i,j}(S)\end{bmatrix} M_{i,j}(S)^\perp - \lambda \begin{bmatrix} M_{i,j}(T) \\ N_{i,j}(T) \\ P_{i,j}(T)\end{bmatrix}
M_{i,j}(T)^\perp \\
& & \begin{bmatrix} 0 \\ q^2_{i,j}(S) \\ q^1_{i,j}(S) \end{bmatrix} - \lambda \begin{bmatrix} 0 \\ q^2_{i,j}(T) \\ q^1_{i,j}(T) \end{bmatrix} = 0.
\end{eqnarray*}
Note that every coordinate of $M_{i,j}(S)^\perp$ is nonzero. Otherwise, it implies a linear dependence among $\delta$ columns of $M_{i,j}(S)$. Thus, we have a linear combination of $\begin{bmatrix} M_{i,j}(S \cup T) \\ N_{i,j}(S \cup T) \\ P_{i,j}(S \cup T)\end{bmatrix}$. However, $|S \cup T| \leq 2\delta+2 \leq \delta + h_1+h_2$. By the MR property, any set of columns of the matrix $\begin{bmatrix} M_{i,j} \\ N_{i,j} \\ P_{i,j}\end{bmatrix}$ of size $\delta + h_1+h_2$ has to be full rank. Thus, we arrive at a contradiction. Thus, no two vectors in $\{ q_{i,j}(S) : S \subseteq {[r+\delta] \choose \delta+1} \}$ are multiples of each other.
\end{proof}
By the above lemma, we can think of $\{ q_{i,j}(S) : S \subseteq {[r+\delta] \choose \delta+1} \}$ as distinct points in $\mathbb{P}^{(h_1h_2+h_1-1)}(\mathbb{F}_q)$. Define set $X_{i,j}$ as follows: $X_{i,j} = \{ q_{i,j}(S) : S \subseteq {[r+\delta] \choose \delta+1} \}$.
The sets $\{X_{i,j}, i \in [t_1], j \in [t_2]\}$ are all mutually disjoint. Since $t_1 \geq h_1$ and $t_2 \geq h_2 +1$, it follows that $t_1 t_2 \geq h_1h_2 + h_1$.
Based on Lemma \ref{lem:lem2}, there is no hyperplane in $\mathbb{P}^{(h_1h_2+h_1-1)}(\mathbb{F}_q)$ which contains $h_1h_2 +h_1$ points from distinct subsets of $\{X_{i,j}, i \in [t_1], j \in [t_2]\}$. By Lemma \ref{lem:projective}, we have lower bound on the field size.

\end{proof}

\bthm

Consider a $[k,r_1,r_2,h_1,h_2,\delta]$ HL MRC. If $4 \leq h_1+h_2 \leq (\delta+2)$, $h_1 \leq \frac{n}{n_1}$ and $h_2 \leq \frac{n_1}{n_2}-1$, then the field size $q$ is lower bounded as follows:
\begin{equation}
q \geq \left ( \frac{\frac{n}{n_2}}{h_1h_2+h_1-1} - 1 \right ) {r+h_1+h_2-2 \choose h_1+h_2-1} - 4.
\end{equation}
\ethm

\begin{proof}
In this case, we do not take arbitrary $S$ and $T$ as in the case of proof of Theorem \ref{thm:acase} but consider subsets $S$ that have size $\delta+1$ but constrained to contain the subset $\{1,2,\ldots,(\delta+2-h_1-h_2)$. By picking the sets in this way, we still ensure that the pairwise unions have size atmost $\delta+h_1+h_2$. The total number of such sets is given by ${r+h_1+h_2-2 \choose h_1+h_2-1}$. Based on this counting, the statement of the theorem follows.
\end{proof}

\bthm

Consider a $[k,r_1,r_2,h_1,h_2,\delta]$ HL-MRC. If $(\delta+2) \leq h_1 + h_2$, $h_1 > \frac{n}{n_1}$ and $h_2 \leq \frac{n_1}{n_2}- \lceil \frac{h_1}{t_1} \rceil$, then the field size $q$ is lower bounded as follows:
\begin{equation}
q \geq \left ( \frac{\frac{n}{n_2}}{\frac{n}{n_1}h_2+h_1-1} - 1 \right ) {r+\delta \choose \delta+1} - 4.
\end{equation}
\ethm

\begin{proof}
Let $f_1 \geq f_2 \geq \ldots \geq f_{t_1}$ be such that $f_i = \lceil \frac{h_1}{t_1} \rceil$ or $\lfloor \frac{h_1}{t_1} \rfloor$ and $\sum_{i=1}^{t_1} f_i = h_1$. From each $i^{th}$ middle code, pick $h_2 + f_i$ local codes and $\delta+1$ columns from every local code. By applying the row reduction similar to the proof of Theorem \ref{thm:acase} and applying appropriately modifying versions of Lemmas \ref{lem:lem1} and \ref{lem:lem2}, the result follows.
\end{proof}

\begin{note}
Please note that we derive the field size bounds for the cases when (i) $(\delta+2) \leq h_1 + h_2$, $h_1 \leq \frac{n}{n_1}$ and $h_2 \leq \frac{n_1}{n_2}-1$ (ii) $(\delta+2) \leq h_1 + h_2$, $h_1 > \frac{n}{n_1}$ and $h_2 \leq \frac{n_1}{n_2}- \lceil \frac{h_1}{t_1} \rceil$. There are other cases of parameters $\delta, h_1, h_2$ for which field-size bounds need to be derived and we leave it as part of future work.
\end{note}

\appendices

\section{Proof of Theorem \ref{thm:h12h21}} \label{app:proof}

The following results related to the determinants of matrices will be useful in proving Theorem \ref{thm:h12h21}.

\blem[\cite{gopi2020maximally}] \label{lem:2prod-id}
Let $C_1$ be an $a\times (a+1)$ matrix, $C_2$ be an $a\times (a+2)$ matrix, $D_1$ be a $3 \times (a+1)$ matrix and $D_2$ be a $3\times (a+2)$ matrix and let  $D_i^{(j)}$ be the $j^{th}$ row of $D_i$. Then,
\begin{align*}
	\det \left[ \begin{array}{c|c}
			C_1 & 0   \\
			\hline
			0   & C_2 \\
			\hline
			D_1 & D_2 \\
		\end{array}
		\right]
	= &
	(-1)^a \cdot \Bigg(
	\det\mattwoone{C_1}{D_1^{(1)}} \cdot \det\matthreeone{C_2}{D_2^{(2)}}{D_2^{(3)}}
	- \det\mattwoone{C_1}{D_1^{(2)}} \cdot \det\matthreeone{C_2}{D_2^{(1)}}{D_2^{(3)}}                                \\
	  & \phantom{(-1)^a \cdot \Bigg(\det\mattwoone{C_1}{D_1^{(1)}} \cdot \det\matthreeone{C_2}{D_2^{(2)}}{D_2^{(3)}}} 
	+ \det\mattwoone{C_1}{D_1^{(3)}}\cdot \det\matthreeone{C_2}{D_2^{(1)}}{D_2^{(2)}}
	\Bigg)
\end{align*}
\elem

\blem[\cite{gopi2020maximally}] \label{lem:3prod-id}
Given $C_1$ and $C_2$ to be $a\times a+1$ matrices and $C_3$ to be an $a \times (a+2)$ matrix. Also, $D_1$ and $D_2$ are $4 \times (a+1)$ matrices while $D_3$ is a $4 \times (a+2)$ matrix. It is also given that $D_3^{(1)}, D_1^{(2)}, D_2^{(2)} = [0]$.
Then,
\begin{align*}
	\det \left[
		\begin{array}{c|c|c}
			C_1 & 0   & 0   \\
			\hline
			0   & C_2 & 0   \\
			\hline
			0   & 0   & C_3 \\
			\hline
			D_1 & D_2 & D_3 \\
		\end{array}
	\right] = (-1)^a \cdot \Bigg( &
	\det \begin{pmatrix}C_{1} \\ D_{1}^{(1)} \end{pmatrix} \cdot
	\det \begin{pmatrix}C_{2} \\ D_{2}^{(3)} \end{pmatrix} \cdot
	\det \begin{pmatrix}C_{3} \\ D_{3}^{(2)} \\ D_{3}^{(4)}\end{pmatrix}                                         \\
	                              & + \det \begin{pmatrix}C_{1} \\ D_{1}^{(1)} \end{pmatrix} \cdot
	\det \begin{pmatrix}C_{2} \\ D_{2}^{(4)} \end{pmatrix} \cdot
	\det \begin{pmatrix}C_{3} \\ D_{3}^{(2)} \\ D_{3}^{(3)}\end{pmatrix}                                         \\
	                              & +
	\det \begin{pmatrix}C_{1} \\ D_{1}^{(3)} \end{pmatrix} \cdot
	\det \begin{pmatrix}C_{2} \\ D_{2}^{(1)} \end{pmatrix} \cdot
	\det \begin{pmatrix}C_{3} \\ D_{3}^{(2)} \\ D_{3}^{(4)}\end{pmatrix}                                         \\
	                              & - \det \begin{pmatrix}C_{1} \\ D_{1}^{(4)} \end{pmatrix} \cdot
	\det \begin{pmatrix}C_{2} \\ D_{2}^{(1)} \end{pmatrix} \cdot
	\det \begin{pmatrix}C_{3} \\ D_{3}^{(2)} \\ D_{3}^{(3)}\end{pmatrix}
	\Bigg)
\end{align*}
\elem

\bprf
Follow as a result of Lemmas B.2 in \cite{gopi2020maximally}.
\eprf

We also define a cauchy matrix here.
\blem[Cauchy Matrix \cite{roth2006}]
Let $a_1, a_2, \ldots, a_n, b_1, b_2, \ldots, b_n \in \Fq{}$ be all distinct. Then,
\[
	\det \begin{bmatrix}
		\frac{1}{a_1-b_1} & \frac{1}{a_2-b_1} & \ldots & \frac{1}{a_n-b_1} \\
		\frac{1}{a_1-b_2} & \frac{1}{a_2-b_2} & \ldots & \frac{1}{a_n-b_2} \\
		\vdots            & \vdots            & \ddots & \vdots            \\
		\frac{1}{a_1-b_n} & \frac{1}{a_2-b_n} & \ldots & \frac{1}{a_n-b_n} \\
	\end{bmatrix} =
	\frac{\prod_{i>j}(a_i-a_j)(b_i-b_j)}{\prod_{i,j}(a_i-b_j)}.
\]
Such a matrix is called an Cauchy Matrix. Every minor of a Cauchy matrix is also an Cauchy matrix.
\elem

Again as in previous proof, we consider the case when there are $\delta$ erasures per local code, one erasure per mid-level code and two more global erasures anywhere in the code. We again look at the erasure patterns within each mid-level codes. There are three distinct patterns possible
\ben
\item No global erasures occur in that mid-level code.
\item Either one or both of the global erasures occur in the mid-level code.
\een
We that each of the above are correctible.

Let $\g_{i,j} = \frac{1}{\A_j-\B_i}$.
\ben

\item When no global erasures occur in the mid-level code, there are $\dl$ erasures per local code and one more erasure per mid-level code.

In this scenario, we involve the mid-level parities. Let $l$ be the affected mid-level code and $l'$ be the local code within the mid-level code where the erasure occurs. The matrix, $B_l$
\[
	B_l = \begin{bmatrix}
		\g_{1,l'_1}     & \g_{1,l'_2}     & \ldots & \g_{1,l'_{\dl+1}}      \\
		\g_{2,l'_1}     & \g_{2,l'_2}     & \ldots & \g_{2,l'_{\dl+1}}      \\
		\vdots          & \vdots          & \ddots & \vdots                 \\
		\g_{\dl,l'_1}   & \g_{\dl,l'_2}   & \ldots & \g_{\dl,l'_{\dl+1}}    \\
		\g_{\dl+1,l'_1} & \g_{\dl+1,l'_2} & \ldots & \g_{\dl+1, l'_{\dl+1}} \\
	\end{bmatrix}.
\]
Where $\{l'_1, l'_2, \ldots, l'_{\dl+1} \}$ are the erased coordinates in local code $l'$. This is a Cauchy matrix and hence $\det(B_l) \neq 0$.

\item When there are global erasures, there are $\dl$ erasures per local code, one erasure per mid-level code and two more erasures anywhere in the code

Here we have a lot more sub-cases.
\ben
\item Both global erasures are in the same local code as the mid-level code. Let $l$ be the affected mid-level code and $l'$ be the local code in the mid-level code where the erasure happens.
The matrix $B_l$ in that case,
\[
	B_l = \begin{bmatrix}
		\cauchy{l'}{\dl+3}{\dl+1}          \\
		\cauchyrowm{l'}{\dl+3}{\dl+2}{\La} \\
		\cauchyrowm{l'}{\dl+3}{\dl+3}{\mu}
	\end{bmatrix}.
\]
This is also a Cauchy matrix with the last two rows scaled to $\La_{l'}$ and $\mu_{l'}$ respectively. Hence $\det(B_l) \neq 0$ and this erasure pattern is correctible.

\item Both global erasures are in the same local code but different one from the mid-level erasure for that mid-level code.

Assume that the $l^{th}$ mid-level code is affected. Let $l''$ be the local code with two erasures while $l'$ be the other one within this mid-level code.
\begin{align*}
	B_l=\begin{bmatrix}
		\begin{matrix}\cauchy{l'}{\dl+1}{\dl}\end{matrix} &                            \\
		                           & \begin{matrix}\cauchy{l''}{\dl+2}{\dl}\end{matrix} \\
		\begin{matrix}\cauchyrowp{l'}{\dl+1}{\dl+1}\end{matrix} & \begin{matrix}\cauchyrowp{l''}{\dl+2}{\dl+1}\end{matrix} \\
		\begin{matrix}\cauchyrowm{l'}{\dl+1}{\dl+2}{\La}\end{matrix} & \begin{matrix}\cauchyrowm{l''}{\dl+2}{\dl+2}{\La}\end{matrix} \\
		\begin{matrix}\cauchyrowm{l'}{\dl+1}{\dl+3}{\mu}\end{matrix} & \begin{matrix}\cauchyrowm{l''}{\dl+2}{\dl+3}{\mu}\end{matrix} \\
	\end{bmatrix}.
\end{align*}
Expanding this via the lemma \ref{lem:2prod-id},
\begin{align*}
	\hspace{-5em}
	\det(B_l) = & \det\bpm
	\cauchy{l'}{\dl+1}{\dl}             \\
	\cauchyrowp{l'}{\dl+1}{\dl+1}
	\epm \cdot
	\det\bpm
	\cauchy{l''}{\dl+2}{\dl}            \\
	\cauchyrowm{l''}{\dl+2}{\dl+2}{\La} \\
	\cauchyrowm{l''}{\dl+2}{\dl+3}{\mu} \\
	\epm                                \\
	            & - \det\bpm
	\cauchy{l'}{\dl+1}{\dl}             \\
	\cauchyrowm{l'}{\dl+1}{\dl+2}{\La}  \\
	\epm \cdot
	\det\bpm
	\cauchy{l''}{\dl+2}{\dl}            \\
	\cauchyrowp{l''}{\dl+2}{\dl+1}      \\
	\cauchyrowm{l''}{\dl+2}{\dl+3}{\mu} \\
	\epm                                \\
	            & + \det\bpm
	\cauchy{l'}{\dl+1}{\dl}             \\
	\cauchyrowm{l'}{\dl+1}{\dl+3}{\mu}  \\
	\epm \cdot
	\det\bpm
	\cauchy{l''}{\dl+2}{\dl}            \\
	\cauchyrowp{l''}{\dl+2}{\dl+1}      \\
	\cauchyrowm{l''}{\dl+2}{\dl+2}{\La} \\
	\epm                                \\
\end{align*}

\begin{align*}
	\hspace{-5em}
	\det(B_l) = & \La_{l''}\mu_{l''} \cdot \det\bpm
	\cauchy{l'}{\dl+1}{\dl}                          \\
	\cauchyrowp{l'}{\dl+1}{\dl+1}
	\epm \cdot
	\det\bpm
	\cauchy{l''}{\dl+2}{\dl}                         \\
	\cauchyrowp{l''}{\dl+2}{\dl+2}                   \\
	\cauchyrowp{l''}{\dl+2}{\dl+3}                   \\
	\epm                                             \\
	            & - \La_{l'}\mu_{l''} \cdot \det\bpm
	\cauchy{l'}{\dl+1}{\dl}                          \\
	\cauchyrowp{l'}{\dl+1}{\dl+2}                    \\
	\epm \cdot
	\det\bpm
	\cauchy{l''}{\dl+2}{\dl}                         \\
	\cauchyrowp{l''}{\dl+2}{\dl+1}                   \\
	\cauchyrowp{l''}{\dl+2}{\dl+3}                   \\
	\epm                                             \\
	            & + \La_{l''}\mu_{l'} \cdot \det\bpm
	\cauchy{l'}{\dl+1}{\dl}                          \\
	\cauchyrowp{l'}{\dl+1}{\dl+3}                    \\
	\epm \cdot
	\det\bpm
	\cauchy{l''}{\dl+2}{\dl}                         \\
	\cauchyrowp{l''}{\dl+2}{\dl+1}                   \\
	\cauchyrowp{l''}{\dl+2}{\dl+2}                   \\
	\epm                                             \\
\end{align*}

Each term in this determinant is $\La_i\mu_j$ multiplied by a Cauchy matrix $\in \Fqo$. The determinant is again a linear combination of $\La_{l'}$ and $\La_{l''}$. Again, this determinant cannot be zero because $\La$'s are 4-wise independent.

\item Both global and the one mid-level erasures are in different local code but the same mid-level code.

Let the affected mid-level code be $l$ and the local codes within, where the erasure occurs, be $l^{(1)}, l^{(2)}$ and $l^{(3)}$. The matrix $B_l$,
\[ \hspace{-5em}
	\scalemath{0.9}{
		B_l = \begin{bmatrix}
			\begin{matrix}\cauchy{l^{(1)}}{\dl+1}{\dl} \end{matrix}                                                           \\

			                           & \begin{matrix}\cauchy{l^{(2)}}{\dl+1}{\dl}\end{matrix}                              \\
			                           &                            & \begin{matrix}\cauchy{l^{(3)}}{\dl+1}{\dl}\end{matrix} \\

			\begin{matrix}\cauchyrowp{l^{(1)}}{\dl+1}{\dl+1}\end{matrix} & \begin{matrix}\cauchyrowp{l^{(2)}}{\dl+1}{\dl+1}\end{matrix} & \begin{matrix}\cauchyrowp{l^{(3)}}{\dl+1}{\dl+1}\end{matrix} \\
			\begin{matrix}\cauchyrowm{l^{(1)}}{\dl+1}{\dl+2}{\La}\end{matrix} & \begin{matrix}\cauchyrowm{l^{(2)}}{\dl+1}{\dl+2}{\La}\end{matrix} & \begin{matrix}\cauchyrowm{l^{(3)}}{\dl+1}{\dl+2}{\La}\end{matrix} \\

			\begin{matrix}\cauchyrowm{l^{(1)}}{\dl+1}{\dl+3}{\mu}\end{matrix} & \begin{matrix}\cauchyrowm{l^{(2)}}{\dl+1}{\dl+3}{\mu}\end{matrix} & \begin{matrix}\cauchyrowm{l^{(3)}}{\dl+1}{\dl+3}{\mu}\end{matrix} \\
		\end{bmatrix}.
	}\]

$\det(B_l)$ can be expanded via lemma \ref{lem:diag-id}. After doing that and setting the determinant to zero,
\[
	\det(B_l) = 0,
\]
we get,
\begin{align*}
	\hspace{-8em}
	                                                                                                                                   & \det \bbm
	\frac{c_{l^{(1)}}d\prod_{i \in [\dl]}(\B_i-\B_{\dl+1})}{e_{l^{(1)}} \prod_{i \in l^{(1)}_S} (\A_i-\B_{\dl+1})}                     &
	\frac{c_{l^{(2)}}d\prod_{i \in [\dl]}(\B_i-\B_{\dl+1})}{e_{l^{(2)}} \prod_{i \in l^{(2)}_S} (\A_i-\B_{\dl+1})}                     &
	\frac{c_{l^{(3)}}d\prod_{i \in [\dl]}(\B_i-\B_{\dl+1})}{e_{l^{(3)}} \prod_{i \in l^{(3)}_S} (\A_i-\B_{\dl+1})}                                                                                                                                                                                     \\
	\La_{l^{(1)}} \cdot \frac{c_{l^{(1)}}d\prod_{i \in [\dl]}(\B_i-\B_{\dl+2})}{e_{l^{(1)}} \prod_{i \in l^{(1)}_S} (\A_i-\B_{\dl+2})} &
	\La_{l^{(2)}} \cdot \frac{c_{l^{(2)}}d\prod_{i \in [\dl]}(\B_i-\B_{\dl+2})}{e_{l^{(2)}} \prod_{i \in l^{(2)}_S} (\A_i-\B_{\dl+2})} &
	\La_{l^{(3)}} \cdot \frac{c_{l^{(3)}}d\prod_{i \in [\dl]}(\B_i-\B_{\dl+2})}{e_{l^{(3)}} \prod_{i \in l^{(3)}_S} (\A_i-\B_{\dl+2})}                                                                                                                                                                 \\
	\mu_{l^{(1)}} \cdot \frac{c_{l^{(1)}}d\prod_{i \in [\dl]}(\B_i-\B_{\dl+3})}{e_{l^{(1)}} \prod_{i \in l^{(1)}_S} (\A_i-\B_{\dl+3})} &
	\mu_{l^{(2)}} \cdot \frac{c_{l^{(2)}}d\prod_{i \in [\dl]}(\B_i-\B_{\dl+3})}{e_{l^{(2)}} \prod_{i \in l^{(2)}_S} (\A_i-\B_{\dl+3})} &
	\mu_{l^{(3)}} \cdot \frac{c_{l^{(3)}}d\prod_{i \in [\dl]}(\B_i-\B_{\dl+3})}{e_{l^{(3)}} \prod_{i \in l^{(3)}_S} (\A_i-\B_{\dl+3})}                                                                                                                                                                 \\
	\ebm                                                                                                                               & = 0                                                                                                                                                           \\
	                                                                                                                                   & \det\bbm
	1                                                                                                                                  & 1                                                                             & 1                                                                             \\
	\La_{l^{(1)}} \prod_{i \in l^{(1)}_S} \frac{\A_i-\B_{\dl+1}}{\A_i-\B_{\dl+2}}                                                      & \La_{l^{(2)}} \prod_{i \in l^{(2)}_S} \frac{\A_i-\B_{\dl+1}}{\A_i-\B_{\dl+2}} & \La_{l^{(3)}} \prod_{i \in l^{(3)}_S} \frac{\A_i-\B_{\dl+1}}{\A_i-\B_{\dl+3}} \\
	\mu_{l^{(1)}} \prod_{i \in l^{(1)}_S} \frac{\A_i-\B_{\dl+1}}{\A_i-\B_{\dl+3}}                                                      & \mu_{l^{(2)}} \prod_{i \in l^{(2)}_S} \frac{\A_i-\B_{\dl+1}}{\A_i-\B_{\dl+2}} & \mu_{l^{(3)}} \prod_{i \in l^{(3)}_S} \frac{\A_i-\B_{\dl+1}}{\A_i-\B_{\dl+3}} \\
	\ebm                                                                                                                               & = 0
\end{align*}
Where, \bit
\item $l^{(i)}_S = \{l^{(i)}_{1}, \ldots, l^{(i)}_{\dl+1}\}$.
\item $c_{l^{(i)}} = \prod^{}_{f>g, f,g \in  l^{(i)}_S} (\A_f - \A_g)$.
\item $d = \prod_{f>g, f,g \in [\dl]}^{} (\B_f - \B_g)$.
\item $e_{l^{(i)}} = \prod_{f \in l^{(i)}_S, g \in [\dl]}^{} (\A_f - \B_g)$.
\eit
Now, by the choice of $\A$'s, $\prod_{i \in l^{(k)}_{S}} \frac{\A_i-\B_{\dl+1}}{\A_i-\B_{\dl+3}} \in G $. And because $\mu_i$ belong to different cosets in $G$, the last row in the above matrix consists of distinct elements. This determinant is a linear combination in the three $\La$'s. Hence the determinant is non-zero because the $\La$'s are 4-wise independent.

\item Both global erasures are in different mid-level code but share that local code with the mid-level parities for that mid-level code.

Assume $k^{th}$ and $l^{th}$ mid-level codes are affected. The local codes within them, where the erasure occurs, are $k'$ and $l'$. The matrix $B_{k,l}$,
\[
	B_{k,l} = \begin{bmatrix}
		\begin{matrix}\cauchy{k'}{\dl+2}{\dl+1}\end{matrix} &                            \\
		                           & \begin{matrix}\cauchy{l'}{\dl+2}{\dl+1}\end{matrix} \\
		\begin{matrix}\cauchyrowm{k'}{\dl+2}{\dl+2}{\La}\end{matrix} &
		\begin{matrix}\cauchyrowm{l'}{\dl+2}{\dl+2}{\La}\end{matrix}                              \\
		\begin{matrix}\cauchyrowm{k'}{\dl+2}{\dl+3}{\mu}\end{matrix} &
		\begin{matrix}\cauchyrowm{l'}{\dl+2}{\dl+3}{\mu}\end{matrix}                              \\
	\end{bmatrix}.
\]
Therefore, for $\det(B_{k,l}) = 0$,
\begin{align*}
	\det(B_{k,l}) & = \det \begin{bmatrix}
		\begin{matrix}\cauchy{k'}{\dl+2}{\dl+1}\end{matrix} &                            \\
		                           & \begin{matrix}\cauchy{l'}{\dl+2}{\dl+1}\end{matrix} \\
		\begin{matrix}\cauchyrowm{k'}{\dl+2}{\dl+2}{\La}\end{matrix} &
		\begin{matrix}\cauchyrowm{l'}{\dl+2}{\dl+2}{\La}\end{matrix}                              \\
		\begin{matrix}\cauchyrowm{k'}{\dl+2}{\dl+3}{\mu}\end{matrix} &
		\begin{matrix}\cauchyrowm{l'}{\dl+2}{\dl+3}{\mu}\end{matrix}                              \\
	\end{bmatrix} = 0 \\
\end{align*}
\begin{align*}
	 & \Rightarrow \det \begin{bmatrix}
		\det \begin{pmatrix}
			\cauchy{k'}{\dl+2}{\dl+1}          \\
			\cauchyrowm{k'}{\dl+2}{\dl+2}{\La} \\
		\end{pmatrix} &
		\det \begin{pmatrix}
			\cauchy{l'}{\dl+2}{\dl+1}          \\
			\cauchyrowm{l'}{\dl+2}{\dl+2}{\La} \\
		\end{pmatrix}   \\
		\det \begin{pmatrix}
			\cauchy{k'}{\dl+2}{\dl+1}          \\
			\cauchyrowm{k'}{\dl+2}{\dl+3}{\mu} \\
		\end{pmatrix} &
		\det \begin{pmatrix}
			\cauchy{l'}{\dl+2}{\dl+1}          \\
			\cauchyrowm{l'}{\dl+2}{\dl+3}{\mu} \\
		\end{pmatrix}   \\
	\end{bmatrix}=0 \\
	 & \Rightarrow \det \begin{bmatrix}
		\La_{k'} \cdot \frac{\prod_{i \in [\dl+1]}(\B_i-\B_{\dl+2})}{\prod_{i \in k'_S} (\A_i-\B_{\dl+2})} &
		\La_{l'} \cdot \frac{\prod_{i \in [\dl+1]}(\B_i-\B_{\dl+2})}{\prod_{i \in l'_S} (\A_i-\B_{\dl+2})}   \\
		\mu_{k'} \cdot \frac{\prod_{i \in [\dl+1]}(\B_i-\B_{\dl+3})}{\prod_{i \in k'_S} (\A_i-\B_{\dl+3})} &
		\mu_{l'} \cdot \frac{\prod_{i \in [\dl+1]}(\B_i-\B_{\dl+3})}{\prod_{i \in l'_S} (\A_i-\B_{\dl+3})}   \\
	\end{bmatrix}=0  \\
	 & \Rightarrow \det \begin{bmatrix}
		\La_{k'}                                                                &
		\La_{l'}                                                                  \\
		\mu_{k'} \prod_{i \in k'_S} \frac{(\A_i-\B_{\dl+2})}{(\A_i-\B_{\dl+3})} &
		\mu_{l'} \prod_{i \in l'_S} \frac{(\A_i-\B_{\dl+2})}{(\A_i-\B_{\dl+3})}   \\
	\end{bmatrix}=0
\end{align*}
Where $k'_S = \{k'_{1}, \ldots, k'_{\dl+2}\}$ and $l'_S = \{l'_{1}, \ldots, l'_{\dl+2}\}$. The terms $c_{l^{(i)}}$, $d$ and $e_{l^{(i)}}$ were factored out from the above determinant where, \bit
\item $c_{l^{(i)}} = \prod^{}_{f>g, f,g \in  l^{(i)}_S} (\A_f - \A_g)$.
\item $d = \prod_{f>g, f,g \in [\dl+1]}^{} (\B_f - \B_g)$.
\item $e_{l^{(i)}} = \prod_{f \in l^{(i)}_S, g \in [\dl+1]}^{} (\A_f - \B_g)$.
\eit

By the choice of $\A_i$'s, $\prod_{i \in x} \frac{(\A_i-\B_{\dl+2})}{(\A_i-\B_{\dl+3})} \in G$ for $x=k'_S, l'_S$. This yet again is a linear combination of two $\La$'s. Hence this determinant is non-zero and the erasure pattern correctable.

\item Each global erasure is in their own different local code and do not share with the mid-level erasures.

There are four local groups where the erasure occurs, two in each mid-level code. Let the affected mid-level codes be $k$ and $l$ while the local codes within, where the erasure occurs, be $k^{(1)}$ and $k^{(2)}$ and $l^{(1)}$ and $l^{(2)}$ respectively. The matrix $B_{k,l}$ is similar to lemma \ref{lem:2prod-id}

\begin{align*}
	B_{k,l}                                 & = \begin{bmatrix}
		A &   \\
		  & B \\
		C & D \\
	\end{bmatrix}\Rightarrow
	\det(B_{k,l}) = \det \bbm
	\det \bpm A                                                                       \\ C^{(1)}\epm& \det \bpm B \\ D^{(1)}\epm\\
	\det \bpm A                                                                       \\ C^{(2)}\epm& \det \bpm B \\ D^{(2)}\epm\\
	\ebm = 0                                                                          \\
	A                                       & = \bbm
	\begin{matrix}\cauchy{k^{(1)}}{\dl+1}{\dl} \end{matrix}                                                        \\
	                                        & \begin{matrix}\cauchy{k^{(2)}}{\dl+1}{\dl}\end{matrix}              \\
	\begin{matrix}\cauchyrowp{k^{(1)}}{\dl+1}{\dl+1}\end{matrix}              & \begin{matrix}\cauchyrowp{k^{(2)}}{\dl+1}{\dl+1}\end{matrix}              \\
	\ebm                                                                              \\
	B                                       & = \bbm
	\begin{matrix}\cauchy{l^{(1)}}{\dl+1}{\dl} \end{matrix}                                                        \\
	                                        & \begin{matrix}\cauchy{l^{(2)}}{\dl+1}{\dl}\end{matrix}              \\
	\begin{matrix}\cauchyrowp{l^{(1)}}{\dl+1}{\dl+1}\end{matrix}              & \begin{matrix}\cauchyrowp{l^{(2)}}{\dl+1}{\dl+1}\end{matrix}              \\
	\ebm                                                                              \\
	C                                       & = \bbm
	\cauchyrowm{k^{(1)}}{\dl+1}{\dl+2}{\La} & \cauchyrowm{k^{(2)}}{\dl+1}{\dl+2}{\La} \\
	\cauchyrowm{k^{(1)}}{\dl+1}{\dl+3}{\mu} &
	\cauchyrowm{k^{(2)}}{\dl+1}{\dl+3}{\mu}                                           \\
	\ebm                                                                              \\
	D                                       & = \bbm
	\cauchyrowm{l^{(1)}}{\dl+1}{\dl+2}{\La} & \cauchyrowm{l^{(2)}}{\dl+1}{\dl+2}{\La} \\
	\cauchyrowm{l^{(1)}}{\dl+1}{\dl+3}{\mu} &
	\cauchyrowm{l^{(2)}}{\dl+1}{\dl+3}{\mu}                                           \\
	\ebm                                                                              \\
\end{align*}

To calculate the whole determinant, We consider the first element,
\begin{align*}
	\hspace{-8em}
	\det \bpm A                                                                                                                                                                                                                                                                            \\ C^{(1)}\epm &=
	\det \bbm
	\begin{matrix}\cauchy{k^{(1)}}{\dl+1}{\dl} \end{matrix}                                                                                                                                                                                                                                                             \\
	                                                                                                                                   & \begin{matrix}\cauchy{k^{(2)}}{\dl+1}{\dl}\end{matrix}                                                                                                                       \\
	\begin{matrix}\cauchyrowp{k^{(1)}}{\dl+1}{\dl+1}\end{matrix}                                                                                                        & \begin{matrix}\cauchyrowp{k^{(2)}}{\dl+1}{\dl+1}\end{matrix}                                                                                                                       \\
	\begin{matrix}\cauchyrowm{k^{(1)}}{\dl+1}{\dl+2}{\La}\end{matrix}                                                                                                        & \begin{matrix}\cauchyrowm{k^{(2)}}{\dl+1}{\dl+2}{\La}\end{matrix}                                                                                                                       \\
	\ebm                                                                                                                                                                                                                                                                                   \\
	                                                                                                                                   & = 	\scalemath{0.95}{\det \bbm
	\det \bpm \cauchy{k^{(1)}}{\dl+1}{\dl}                                                                                                                                                                                                                                                 \\ \cauchyrowp{k^{(1)}}{\dl+1}{\dl+1} \epm &
	\det \bpm \cauchy{k^{(2)}}{\dl+1}{\dl}                                                                                                                                                                                                                                                 \\ \cauchyrowp{k^{(2)}}{\dl+1}{\dl+1} \epm \\
	\det \bpm \cauchy{k^{(1)}}{\dl+1}{\dl}                                                                                                                                                                                                                                                 \\
	\cauchyrowm{k^{(1)}}{\dl+1}{\dl+2}{\La} \epm                                                                                       &
	\det \bpm \cauchy{k^{(2)}}{\dl+1}{\dl}                                                                                                                                                                                                                                                 \\\cauchyrowm{k^{(2)}}{\dl+1}{\dl+2}{\La} \epm \\
	\ebm}                                                                                                                                                                                                                                                                                  \\
	                                                                                                                                   & =	\det \bbm
	\frac{c_{k^{(1)}}d\prod_{i \in [\dl]}(\B_i-\B_{\dl+1})}{e_{k^{(1)}} \prod_{i \in k^{(1)}_S} (\A_i-\B_{\dl+1})}                     &
	\frac{c_{k^{(2)}}d\prod_{i \in [\dl]}(\B_i-\B_{\dl+1})}{e_{k^{(2)}} \prod_{i \in k^{(2)}_S} (\A_i-\B_{\dl+1})}                                                                                                                                                                         \\
	\La_{k^{(1)}} \cdot \frac{c_{k^{(1)}}d\prod_{i \in [\dl]}(\B_i-\B_{\dl+2})}{e_{k^{(1)}} \prod_{i \in k^{(1)}_S} (\A_i-\B_{\dl+2})} &
	\La_{k^{(2)}} \cdot \frac{c_{k^{(2)}}d\prod_{i \in [\dl]}(\B_i-\B_{\dl+2})}{e_{k^{(2)}} \prod_{i \in k^{(2)}_S} (\A_i-\B_{\dl+2})}                                                                                                                                                     \\
	\ebm                                                                                                                                                                                                                                                                                   \\
	                                                                                                                                   & = \scalemath{1}{\frac{c_{k^{(1)}}c_{k^{(2)}}d^2}{e_{k^{(1)}} e_{k^{(2)}}} \prod_{i \in [\dl]}(\B_i - \B_{\dl+1})(\B_i-\B_{\dl+2}) \cdot \det \bbm
	\prod_{i \in k^{(1)}_S} \frac{1}{\A_i-\B_{\dl+1}}                                                                                  & \prod_{i \in k^{(2)}_S} \frac{1}{\A_i-\B_{\dl+1}}                                                                                                 \\
	\La_{k^{(1)}} \cdot \prod_{i \in k^{(1)}_S} \frac{1}{\A_i-\B_{\dl+2}}                                                              & \La_{k^{(2)}} \cdot \prod_{i \in k^{(2)}_S} \frac{1}{\A_i-\B_{\dl+2}}                                                                             \\
	\ebm}.                                                                                                                                                                                                                                                                                 \\
\end{align*}
Where, \bit
\item $k^{(i)}_S = \{k^{(i)}_{1}, \ldots, k^{(i)}_{\dl+2}\}$.
\item $c_{k^{(i)}} = \prod^{}_{f>g, f,g \in  k^{(i)}_S} (\A_f - \A_g)$.
\item $d = \prod_{f>g, f,g \in [\dl]}^{} (\B_f - \B_g)$.
\item $e_{k^{(i)}} = \prod_{f \in k^{(i)}_S, g \in [\dl]}^{} (\A_f - \B_g)$.
\eit

Applying all this in the main determinant and setting,
\[
	\det(B_{k,l}) = 0
\]
and factoring out the common multiples, we get
\begin{align*}
	\hspace{-8em}
	\scalemath{0.80}{\det \bbm
		\det\bpm
	\prod_{i \in k^{(1)}_S} \frac{1}{\A_i-\B_{\dl+1}}                                                                                                                         & \prod_{i \in k^{(2)}_S} \frac{1}{\A_i-\B_{\dl+1}}                     \\
	\La_{k^{(1)}} \cdot \prod_{i \in k^{(1)}_S} \frac{1}{\A_i-\B_{\dl+2}}                                                                                                     & \La_{k^{(2)}} \cdot \prod_{i \in k^{(2)}_S} \frac{1}{\A_i-\B_{\dl+2}} \\
	\epm                                                                                                                                                                      &
		\det\bpm
	\prod_{i \in l^{(1)}_S} \frac{1}{\A_i-\B_{\dl+1}}                                                                                                                         & \prod_{i \in l^{(2)}_S} \frac{1}{\A_i-\B_{\dl+1}}                     \\
	\La_{l^{(1)}} \cdot \prod_{i \in l^{(1)}_S} \frac{1}{\A_i-\B_{\dl+2}}                                                                                                     & \La_{l^{(2)}} \cdot \prod_{i \in l^{(2)}_S} \frac{1}{\A_i-\B_{\dl+2}} \\
	\epm                                                                                                                                                                                                                                              \\
		\det\bpm
	\prod_{i \in k^{(1)}_S} \frac{1}{\A_i-\B_{\dl+1}}                                                                                                                         & \prod_{i \in k^{(2)}_S} \frac{1}{\A_i-\B_{\dl+1}}                     \\
	\mu_{k^{(1)}} \cdot \prod_{i \in k^{(1)}_S} \frac{1}{\A_i-\B_{\dl+3}}                                                                                                     & \mu_{k^{(2)}} \cdot \prod_{i \in k^{(2)}_S} \frac{1}{\A_i-\B_{\dl+3}} \\
	\epm                                                                                                                                                                      &
		\det\bpm
	\prod_{i \in l^{(1)}_S} \frac{1}{\A_i-\B_{\dl+1}}                                                                                                                         & \prod_{i \in l^{(2)}_S} \frac{1}{\A_i-\B_{\dl+1}}                     \\
	\mu_{l^{(1)}} \cdot \prod_{i \in l^{(1)}_S} \frac{1}{\A_i-\B_{\dl+3}}                                                                                                     & \mu_{l^{(2)}} \cdot \prod_{i \in l^{(2)}_S} \frac{1}{\A_i-\B_{\dl+3}} \\
	\epm                                                                                                                                                                                                                                              \\
	\ebm=0}                                                                                                                                                                                                                                           \\
	\det \scalemath{0.9}{\bbm
	\La_{k^{(2)}} \cdot \prod_{i \in k^{(2)}_S} \frac{\A_i-\B_{\dl+1}}{\A_i-\B_{\dl+2}} - \La_{k^{(1)}} \cdot \prod_{i \in k^{(1)}_S} \frac{\A_i-\B_{\dl+1}}{\A_i-\B_{\dl+2}} &
	\La_{l^{(2)}} \cdot \prod_{i \in l^{(2)}_S} \frac{\A_i-\B_{\dl+1}}{\A_i-\B_{\dl+2}} - \La_{l^{(1)}} \cdot \prod_{i \in l^{(1)}_S} \frac{\A_i-\B_{\dl+1}}{\A_i-\B_{\dl+2}}                                                                         \\
	\mu_{k^{(2)}} \cdot \prod_{i \in k^{(2)}_S} \frac{\A_i-\B_{\dl+1}}{\A_i-\B_{\dl+3}} - \mu_{k^{(1)}} \cdot \prod_{i \in k^{(1)}_S} \frac{\A_i-\B_{\dl+1}}{\A_i-\B_{\dl+3}} &
	\mu_{l^{(2)}} \cdot \prod_{i \in l^{(2)}_S} \frac{\A_i-\B_{\dl+1}}{\A_i-\B_{\dl+3}} - \mu_{l^{(1)}} \cdot \prod_{i \in l^{(1)}_S} \frac{\A_i-\B_{\dl+1}}{\A_i-\B_{\dl+3}}                                                                         \\
		\ebm=0}
\end{align*}
Where similarly, $l^{(i)}_S = \{l^{(i)}_{1}, \ldots, l^{(i)}_{\dl+2}\}$.

Now, since the $\La_i$'s are 4-wise independent over $\Fqo$, the first row is never zero. Similarly, all the $\mu_j$'s are in different cosets of $G$ and by choice of $\A$'s
$\prod_{i \in l^{(j)}_S, k^{(j)}_S} \frac{\A_i-\B_{\dl+1}}{\A_i-\B_{\dl+3}} \in G$. Hence the last row isn't zero either. Then this determinant resolves into a linear combination for 4 different values of $\La_i$s. Hence, by linear independence rules of $\La$, this determinant is also non-zero.

\item In this case, one of the global erasure shares the local code with a mid-level code while the other does not. Assume that the $k^{th}$ and $l^{th}$ mid-level codes are affected.
Let the local codes within, where the erasure occurs, be $k^{(1)}, k^{(2)}$ and $l^{(1)}$. The matrix $B_{k,l}$,

\[\hspace{-6em}
	\scalemath{0.9}{
		B_{k,l} = \begin{bmatrix}
			\begin{matrix}\cauchy{k^{(1)}}{\dl+1}{\dl} \end{matrix}                                                             \\

			                            & \begin{matrix}\cauchy{k^{(2)}}{\dl+1}{\dl}\end{matrix}                               \\

			\begin{matrix}\cauchyrowp{k^{(1)}}{\dl+1}{\dl+1}\end{matrix} & \begin{matrix}\cauchyrowp{k^{(2)}}{\dl+1}{\dl+1}\end{matrix}                               \\

			                            &                             & \begin{matrix}\cauchy{l^{(1)}}{\dl+2}{\dl+1}\end{matrix} \\

			\begin{matrix}\cauchyrowm{k^{(1)}}{\dl+1}{\dl+2}{\La}\end{matrix} & \begin{matrix}\cauchyrowm{k^{(2)}}{\dl+1}{\dl+2}{\La}\end{matrix} & \begin{matrix}\cauchyrowm{l^{(1)}}{\dl+2}{\dl+2}{\La}\end{matrix} \\

			\begin{matrix}\cauchyrowm{k^{(1)}}{\dl+1}{\dl+3}{\mu}\end{matrix} & \begin{matrix}\cauchyrowm{k^{(2)}}{\dl+1}{\dl+3}{\mu}\end{matrix} & \begin{matrix}\cauchyrowm{l^{(1)}}{\dl+2}{\dl+3}{\mu}\end{matrix} \\
		\end{bmatrix}
	}\]

Now, after permuting one row, we can apply \ref{lem:3prod-id} to expand the matrix for the determinant,
\begin{align*}
	\hspace{-8em}
	\det(B_{k,l}) = &                                                   \\
	                & \scalemath{0.75}{\det \begin{pmatrix}
			\cauchy{k^{(1)}}{\dl+1}{\dl}       \\
			\cauchyrowp{k^{(1)}}{\dl+1}{\dl+1} \\
		\end{pmatrix}
		\det \begin{pmatrix}
			\cauchy{k^{(2)}}{\dl+1}{\dl} \\
			\cauchyrowm{k^{(2)}}{\dl+1}{\dl+2}{\La}
		\end{pmatrix}
	\det \begin{pmatrix}
			\cauchy{l^{(1)}}{\dl+2}{\dl}            \\
			\cauchyrowp{l^{(1)}}{\dl+2}{\dl+1}      \\
			\cauchyrowm{l^{(1)}}{\dl+2}{\dl+3}{\mu} \\
		\end{pmatrix}} +                                 \\
	                & \scalemath{0.75}{\det \begin{pmatrix}
			\cauchy{k^{(1)}}{\dl+1}{\dl} \\
			\cauchyrowp{k^{(1)}}{\dl+1}{\dl+1}
		\end{pmatrix}
		\det \begin{pmatrix}
			\cauchy{k^{(2)}}{\dl+1}{\dl} \\
			\cauchyrowm{k^{(2)}}{\dl+1}{\dl+3}{\mu}
		\end{pmatrix}
	\det \begin{pmatrix}
			\cauchy{l^{(1)}}{\dl+2}{\dl}       \\
			\cauchyrowp{l^{(1)}}{\dl+2}{\dl+1} \\
			\cauchyrowm{l^{(1)}}{\dl+2}{\dl+2}{\La}
		\end{pmatrix}} +                                 \\
	                & \scalemath{0.75}{\det \begin{pmatrix}
			\cauchy{k^{(1)}}{\dl+1}{\dl}            \\
			\cauchyrowm{k^{(1)}}{\dl+1}{\dl+2}{\La} \\
		\end{pmatrix}
		\det \begin{pmatrix}
			\cauchy{k^{(1)}}{\dl+1}{\dl}       \\
			\cauchyrowp{k^{(2)}}{\dl+1}{\dl+1} \\
		\end{pmatrix}
	\det \begin{pmatrix}
			\cauchy{l^{(1)}}{\dl+2}{\dl}            \\
			\cauchyrowp{l^{(1)}}{\dl+2}{\dl+1}      \\
			\cauchyrowm{l^{(1)}}{\dl+2}{\dl+3}{\mu} \\
		\end{pmatrix}} -                                 \\
	                & \scalemath{0.75}{\det \begin{pmatrix}
			\cauchy{k^{(1)}}{\dl+1}{\dl}            \\
			\cauchyrowm{k^{(1)}}{\dl+1}{\dl+3}{\mu} \\
		\end{pmatrix}
		\det \begin{pmatrix}
			\cauchy{k^{(2)}}{\dl+1}{\dl}       \\
			\cauchyrowp{k^{(2)}}{\dl+1}{\dl+1} \\
		\end{pmatrix}
	\det \begin{pmatrix}
			\cauchy{l^{(1)}}{\dl+2}{\dl}            \\
			\cauchyrowp{l^{(1)}}{\dl+2}{\dl+1}      \\
			\cauchyrowm{l^{(1)}}{\dl+2}{\dl+2}{\La} \\
		\end{pmatrix}}                                   \\
\end{align*}
Now, in this massive expansion, we can take $\La_{i}$ and $\mu_{j}$ out of the determinants. What we will find is that each term is $\La_{i}\mu_{j}$ multiplied by the product of the determinant of three Cauchy matrices. Each of those determinant $\in \Fqo$.

Hence the final determinant is actually the linear combination of three $\La_i$ in $\Fqo$. Hence $\det(B_{k,l}) \neq 0$.
\een
\een

\medskip

\bibliography{biblo}{}
\bibliographystyle{ieeetr}

\end{document}